\definecolor{myurlcolor}{rgb}{0,0,0.7}
\theoremstyle{plain}
\newtheorem{thm}{\protect\theoremname}
\newtheorem{prop}[thm]{Proposition}
\newtheorem{lem}[thm]{Lemma}
\providecommand{\theoremname}{Theorem}
\newcommand*{\myproofname}{Proof}
\newtheorem*{cor}{Corollary}
\theoremstyle{definition}
\theoremstyle{remark}
\newtheorem{rem}{Remark}
\begin{document}

 \author{Chunhe Xiong}
 \email{xiongchunhe@zju.edu.cn}
 \affiliation{School of Mathematical Sciences, Zhejiang University, Hangzhou 310027, PR~China}
 \affiliation{Department of Computer Science, Sun Yat-sen University, Guangzhou 510006, PR~China}

 \author{Asutosh Kumar}
 \email{Corresponding author: asutoshk.phys@gmail.com}
 \affiliation{P.G. Department of Physics, Gaya College, Magadh University, Rampur, Gaya 823001, India}

 \author{Minyi Huang}
 \email{Corresponding author: hmyzd2011@126.com}
 \affiliation{School of Mathematical Sciences, Zhejiang University, Hangzhou 310027, PR~China}
 \affiliation{Department of Mathematics, Zhejiang Sci-Tech University, Hangzhou 310018, PR~China}

 \author{Sreetama Das}
 \email{sreetamadas@hri.res.in}
 \affiliation{Harish-Chandra Research Institute, HBNI, Chhatnag Road, Jhunsi, Allahabad 211019, India}

  \author{Ujjwal Sen}
 \email{Corresponding author: ujjwal@hri.res.in}
 \affiliation{Harish-Chandra Research Institute, HBNI, Chhatnag Road, Jhunsi, Allahabad 211019, India}

 \author{Junde Wu}
 \email{Corresponding author: wjd@zju.edu.cn}
 \affiliation{School of Mathematical Sciences, Zhejiang University, Hangzhou 310027, PR~China}

\title{Partial coherence and quantum correlation with fidelity and affinity distances}
\begin{abstract}
A fundamental task in any physical theory is to quantify certain physical quantity in a meaningful way.
In this paper we show that both fidelity distance and affinity distance satisfy the strong contractibility, and the corresponding resource quantifiers can be used to characterize a large class of resource theories.
Under two assumptions, namely, convexity of ``free states'' and closure of free states under ``selective free operations'', our general framework of resource theory includes quantum resource theories of entanglement, coherence, partial coherence and superposition.
In partial coherence theory, we show that fidelity partial coherence of a bipartite state is equal to the minimal error probability of a mixed quantum state discrimination (QSD) task and vice versa, which complements the main result in [\href{http://iopscience.iop.org/10.1088/1751-8121/aac979}{Xiong and Wu, J. Phys. A: Math. Theor. {\bf 51}, 414005 (2018)}]. 
We also compute the analytic expression of fidelity partial coherence for $(2,n)$ bipartite X-states. 
At last, we study the correlated coherence in the framework of partial coherence theory. We show that partial coherence of a bipartite state, with respect to the eigenbasis of a subsystem, is actually a measure of quantum correlation.
\end{abstract}

\maketitle

\section{introduction}

Quantum correlations \cite{Horodecki2009R, Modi2012}
have been shown to be an important resource in quantum information theory (QIT) \cite{nielsen10} since they often offer a remarkable advantage in information processing tasks over classical theory. It has been argued that there may be several independent resources that provide a quantum advantage, including quantum entanglement \cite{Horodecki2009R}, quantum coherence \cite{Aberg2006,Baumgratz2014,Winter2016,streltsov2017A}, asymmetry \cite{Marvian2013,Marvian2014}, athermality \cite{Brandao2015,Horodecki2013} and quantum superposition \cite{Theurer2017}, etc.
Every resource theory puts certain restrictions on quantum states and quantum operations in the sense that what states and operations are accessible ``free of cost'' and what are ``assets'' or resource. This however does not mean that free states and free operations of a resource theory do not incur costs of preparation and implementation.

Thus, any resource theory begins with the identification of ``free states'' that do not possess resource, and ``free operations'' that do not generate resource.  We will denote the set of free states by $\mathcal{FS}$ and the set of free operations by $\mathcal{FO}$. By definition, a state outside $\mathcal{FS}$ and an operation outside $\mathcal{FO}$ are regarded as resources. Quantifying the resource for a given state is a fundamental problem in the resource theory.  Particularly, 
geometric entanglement and geometric coherence which characterize the minimal distance of the state to the free states, have been proposed as measures of entanglement and coherence respectively \cite{Wei2003,Aberg2006,Baumgratz2014, Winter2016,streltsov2017A}.
Recently, a common framework for characterization and quantification of the convex quantum resources was proposed in \cite{Regula2018A}. It was shown that, for generic resource theories, one can establish the strong monotonicity of a family of resource quantifiers such as robustness monotones and norm-based quantifiers, etc.
%


 In this paper, we first prove that both fidelity distance and affinity distance also satisfy the strong monotonicity condition and can be used to characterize generic resource theories, including entanglement, coherence, partial coherence and superposition.

Next, we focus on partial coherence theory which is an extension of coherence theory \cite{Luo2017A,Kim2018B}. From the viewpoint of partial coherence, quantum discord can be regarded as the minimal partial coherence over all local orthogonal basis \cite{Luo2017B}. In Ref. \cite{spehner2013A}, the authors linked quantum discord with QSD for the first time. An equivalence between pure state QSD task and coherence theory was established in Ref. \cite{Xiong2018A}. It is a difficult problem to obtain this equivalence for general mixed states. However, by constructing a QSD-state for each QSD task, we will show that the QSD task for mixed states just corresponds to partial coherence. In this way, we offer the operational meaning for both fidelity- and affinity-based partial coherence measures. Our results thus establish a useful connection between partial coherence theory and QSD.

Finally, we compute the analytic expression of fidelity partial coherence for bipartite X-states.



In Sec. \ref{sec:setting the stage}, we provide a few definitions and notations.  Fidelity and affinity and the corresponding distance measures are defined in Sec. \ref{sec:quantifying resource} A and \ref{sec:quantifying resource} B, and prove the strong monotonicity of the distance measures in the latter. In Sec. \ref{sec:quantifying resource} C, we quantify resource based on fidelity and affinity, and prove some of their properties. The formalism developed in these sections is employed in Sec. \ref{fidelity-coherence} to study the quantification of partial coherence with fidelity distance and its connection with QSD. The connection between affinity partial coherence and QSD is discussed in Sec. \ref{affinity-coherence}. In Sec. \ref{sec:correlated coherence}, we study the correlated coherence in the framework of partial coherence theory. We present a conclusion in Sec. \ref{sec:conclusion}.

\section{setting the stage}
\label{sec:setting the stage}

In this paper we consider a general framework of resource theory for finite dimensional quantum systems which is built on two postulates: convexity of free states and closure of free states under ``selective'' free operations.

Let $\rho,\sigma\in\mathcal{FS}$ and $p\in(0,1)$. Then it is natural to ask whether $p\rho+(1-p)\sigma\in\mathcal{FS}$.
We adopt the following postulate for free states.
%

{\bf Postulate 1.} $\mathcal{FS}$ is convex, meaning that linear convex combinations of free states are also free states.\\

Free operations will transform free states to free states. In practice, one may also demand that selective measurements (measurements whose outcomes are separately accessible) map free states to free ones. In such a case, there exists some Kraus decomposition $\{K_n\}$ of a free operation $\Phi\in\mathcal{FO}$ such that 
$K_n\sigma K^{\dagger}_n\in\mathcal{FS}$ for each $n$ and $\sigma\in\mathcal{FS}$. This means that we have adopted the following postulate for free operations.

{\bf Postulate 2.} Each free operation $\Phi\in\mathcal{FO}$ admits a Kraus representation $\Phi(\cdot)=\sum_n\Phi_n(\cdot)=\sum_nK_n(\cdot) K^{\dagger}_n$, such that $K_n\mathcal{FS}K^{\dagger}_n\subseteq\mathcal{FS}$ for each $n$.

\begin{rem}
These two requirements are natural and practical, and resource theories satisfying these two postulates include entanglement, quantum coherence, and superposition. In case of Postulate 1, one may additionally demand that $\mathcal{FS}$ is closed, i.e., it contains all its limit points.
\end{rem}

A good resource measure must vanish for free states since these states are regarded as no-resource states. Moreover, as free operations can be executed generously, the resource of a state should not increase under free operations. As a result, these two conditions are fundamental for a resource quantifier $R$ and can be expressed mathematically as follows:

(R1') {\it Non-negativity}. $R(\rho)\ge0$ and $R(\rho)=0$ for every $\rho\in\mathcal{FS}$.

(R1) {\it Faithfulness}. $R(\rho)\ge0$ and equality holds if and only if $\rho\in\mathcal{FS}$.

\begin{rem}
Note that (R1') is a weaker condition than (R1). According to (R1'), resource measures must vanish, by definition, for each state in $\mathcal{FS}$. In additon, these resource measures might also vanish for certain states that do not belong to $\mathcal{FS}$. This leaves  room for operational measures like distillable entanglement \cite{Bennett1996,Horodecki1998} which vanishes for certain entangled states \cite{Horodecki2009R}. On the other hand, (R1) says that resource measures will vanish only for states in $\mathcal{FS}$.
\end{rem}

(R2) {\it Monotonicity}. $R(\rho)\ge R(\Phi(\rho))$ for any $\rho$ and $\Phi\in\mathcal{FO}$.

Parallel to Postulate 2, we may also require that the resource of a state does not increase under selective measurements. This translates into the strong monotonicity condition below.\\
(R3) {\it Strong monotonicity}. $R(\rho) \ge \sum_n p_n R(\rho_n)$, where $p_n={\mathrm{tr}}(K_n\rho K^{\dagger}_n)$ and $\rho_n=\frac{K_n\rho K^{\dagger}_n}{p_n}$, for a free operation $\Phi \equiv \{K_n\}\in\mathcal{FO}$ such that $\sum_nK^{\dagger}_nK_n=I$ and $K_n\mathcal{FS}K^{\dagger}_n\subseteq\mathcal{FS}$.\\

Implicit to the structure of a resource theory, there are two natural ways to quantify resource. If the resource theory has a ``unit resource'', such as an ebit in entanglement theory \cite{Bennett1996} and the maximally coherent pure state in coherence theory \cite{Winter2016}, distillable resource and cost of resource can be considered \cite{Rains1999A,Rains1999B}. Irrespective of such possibilities of defining resources via rates, distances in state space also provide excellent avenues to quantify resource. Supposing $d$ is some distance in the state space, the resource measure can be defined as
\begin{align}\label{eq:resource-measure}
R_d(\rho):=\min_{\sigma\in\mathcal{FS}}d(\rho,\sigma).
\end{align}

It is clear that $R_d$ satisfies (R1), provided $\mathcal{FS}$ is closed. Moreover, if the distance is contractive, i.e., $d(\rho,\sigma)\ge d\left(\Psi(\rho),\Psi(\sigma)\right)$ for any completely-positive and trace-preserving (CPTP) map $\Psi$, then $R_d$ satisfies (R2) also.
%
We say a distance $d$ satisfies {\it strong contractibility} if
\begin{align}\label{eq:strong-monotonicity-distance}
\sum_ip_id(\rho_i,\sigma_i)\le d(\rho,\sigma),
\end{align}
for any pair of density matrices $\rho$ and $\sigma$, a set of Kraus operators $\{K_i\}$ and  $p_i=\mathrm{Tr}(K_i\rho K^{\dagger}_i)$, $q_i=\mathrm{Tr}(K_i\sigma K^{\dagger}_i)$, $\rho_i=\frac{K_i\rho K^{\dagger}_i}{p_i}$,  $\sigma_i=\frac{K_i\sigma K^{\dagger}_i}{q_i}$.
%
Supposing distance $d$ satisfies strong contractibility, we have
\begin{align*}
R_d(\rho)=d(\rho,\sigma^{\star})\overset{Eq. (\ref{eq:strong-monotonicity-distance})} \ge \sum_np_nd(\rho_n,\sigma^{\star}_n) \overset{Eq. (\ref{eq:resource-measure})} \ge\sum_np_nR_d(\rho_n),
\end{align*}
where $\sigma^{\star}$ is the free state for which minimum is achieved in Eq. (\ref{eq:resource-measure}), $\sigma^{\star}_n=K_n\sigma^{\star}K^{\dagger}_n/\mathrm{tr}(K_n\sigma^{\star}K^{\dagger}_n)\in\mathcal{FS}$ and other notations are as defined above.
While the first inequality is due to the strong contractibility of $d$, the second inequality follows from the definition of $R_d$.
We may also demand the resource quantifier to obey convexity. That is, resource should not increase under convex combination of quantum states.

(R4) {\it Convexity}. If the system is in state $\rho_i$ with probability $p_i$, then $R\left(\sum_ip_i\rho_i\right)\le\sum_ip_iR(\rho_i)$.\\

Similar to entanglement, a quantifier $R$ in this general resource theory is called (strong) resource monotone if it satisfies (strong) monotonicity and faithfulness. In addition, if $R$ satisfies convexity also, we call it convex (strong) resource monotone. A convex strong resource monotone is called a resource measure.

\section{quantifying resource}
\label{sec:quantifying resource}
In this section, we review two distance measures using which we can establish bona fide measures to quantify a general resource.

\subsection{fidelity and affinity}

Let $\mathcal{H}$ be an $n$-dimensional Hilbert space and $\mathcal{E}(\mathcal{H})$ be the set of density matrices on $\mathcal{H}$. For any $\rho,\sigma\in\mathcal{E}(\mathcal{H})$, the fidelity between $\rho$ and $\sigma$ is defined as \cite{nielsen10}:
\begin{align}\label{eq6}
F(\rho,\sigma):=||\sqrt{\rho}\sqrt{\sigma}||_1=\mathrm{Tr}\sqrt{\sqrt{\sigma}\rho\sqrt{\sigma}}.
\end{align}
Similarly, quantum affinity is defined as follows \cite{luo2004}:
 \begin{align}\label{eq5}
 A(\rho,\sigma):=\mathrm{Tr}(\sqrt{\rho}\sqrt{\sigma}).
 \end{align}
This definition is similar to the Bhattacharyya coefficient \cite{Bhattacharyya1943} between two probability distributions (discrete or continuous) in classical probability theory. Both fidelity and affinity characterize the closeness of two quantum states in the state space.

As $F(\rho,\sigma)=\max_U|\mathrm{Tr}U\sqrt{\rho}\sqrt{\sigma}|$, with the maximization being over all unitary operators on $\mathcal{H}$, we have $A(\rho,\sigma)\le F(\rho,\sigma)$.
Moreover, since $\mathrm{Tr}(\sqrt{\rho}\sqrt{\sigma})=\mathrm{Tr}(\rho^{1/4}\sigma^{1/2}\rho^{1/4})$, we have $0\le A(\rho,\sigma)\le 1$,
and $A(\rho,\sigma)=1$ if and only if $\rho=\sigma$.

Let $X$ be fidelity or affinity. Then $X$ satisfies the following properties:

 (P1) $X(\rho,\sigma)\in[0,1]$ with $X(\rho,\sigma)=1$ iff $\rho=\sigma$.

 (P2) $X(\frac{\rho}{p},\frac{\sigma}{q})=\frac{X(\rho,\sigma)}{\sqrt{pq}}$ with $p,q\in(0,1)$.

 (P3) $X(\rho,\sigma)$ obeys monotonicity under CPTP maps.

 (P4) $X(\sum_i P_i\rho P_i,\sum_i P_i\sigma P_i)=\sum_i X(P_i\rho P_i,P_i\sigma P_i)$ for mutually orthogonal projectors $\{P_i\}$.
\begin{proof}
Since $\{P_i\}$ are mutually orthogonal projectors, we have $\sqrt{\sum_iP_i\rho P_i}=\sum_i\sqrt{P_i\rho P_i}$ and
 \begin{align}
  A\left(\sum_i P_i\rho P_i,\sum_i P_i\sigma P_i \right)=&\mathrm{Tr}\sqrt{\sum_iP_i\rho P_i}\sqrt{\sum_j P_j\sigma P_j}\nonumber\\
 =&\mathrm{Tr}\sum_i\sqrt{P_i\rho P_i}\sum_j \sqrt{ P_j\sigma P_j}\nonumber\\
 =&\sum_i\mathrm{Tr}\sqrt{P_i\rho P_i}\sqrt{ P_i\sigma P_i}\nonumber\\
 =&\sum_iA(P_i\rho P_i,P_i\sigma P_i).\nonumber
 \end{align}
Similarly, we can show that (P4) holds for fidelity $F$.
\end{proof}

 (P5) For a CPTP map $\Phi \equiv \{K_i\}$,
\begin{align}
X(\rho,\sigma)\le\sum_i X(p_i \rho_i, q_i \sigma_i) = \sum_i X(K_i\rho K_i^{\dagger},K_i\sigma K^{\dagger}_i),
\end{align}
where $\rho_i=\frac{K_i\rho K_i^{\dagger}}{p_i}$ and $\sigma_i=\frac{K_i\sigma K^{\dagger}_i}{q_i}$ are the states after subselection with probabilities $p_i=\mathrm{Tr}(K_i\rho_iK_i^{\dagger})$ and $q_i=\mathrm{Tr}(K_i\sigma_iK_i^{\dagger})$, respectively. Property (P5) can be proven using the method in Ref. \cite{vedral1998} and exploiting properties (P3) and (P4) (see Ref. \cite{note2}).

\subsection{distance measures using fidelity and affinity}

Assuming $X$ to be a function in $\mathcal{E}(\mathcal{H})\otimes\mathcal{E}(\mathcal{H})$ satisfying (P1-P5), the distance given by
 \begin{align}\label{eq7}
 d_X(\rho,\sigma):=1-X^2(\rho,\sigma),
 \end{align}
has the following two properties:

(D1) $d_X(\rho,\sigma)\ge 0$ with equality iff $\rho=\sigma$.

(D2) $d_X(\rho,\sigma)$ is contractive, that is, $d_X(\Phi(\rho),\Phi(\sigma))\le d_X(\rho,\sigma)$ for any CPTP map $\Phi$.

With this, we arrive at our first main result in this paper: $d_X(\rho,\sigma)$ satisfies the strong contractibility. To prove this we need the following lemma.

\begin{lem}\label{lem2}
For a probability vector $\{p_i\}^n_{i=1}$ and a vector of positive real numbers $\{x_i\}^n_{i=1}$,
\begin{align}
\sum_i \frac{x^2_i}{p_i} \ge \left(\sum_i x_i \right)^2.
\label{eq:lem2}
\end{align}
\end{lem}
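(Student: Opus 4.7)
The plan is to reduce the inequality to the Cauchy--Schwarz inequality. First, I would rewrite each summand of $\sum_i x_i$ in the product form $x_i = (x_i/\sqrt{p_i}) \cdot \sqrt{p_i}$, so that the sum appears as the Euclidean inner product of the two sequences $\{x_i/\sqrt{p_i}\}_{i=1}^n$ and $\{\sqrt{p_i}\}_{i=1}^n$. Applying Cauchy--Schwarz then gives
\begin{equation*}
\left(\sum_i x_i\right)^2 \;=\; \left(\sum_i \frac{x_i}{\sqrt{p_i}}\cdot\sqrt{p_i}\right)^2 \;\le\; \left(\sum_i \frac{x_i^2}{p_i}\right)\left(\sum_i p_i\right),
\end{equation*}
and since $\{p_i\}$ is a probability vector, $\sum_i p_i = 1$, which yields \eqref{eq:lem2}.

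An equivalent one-line argument goes via Jensen's inequality for the convex function $f(t)=t^2$ evaluated at $t_i = x_i/p_i$ under the probability distribution $\{p_i\}$: the inequality $(\sum_i p_i t_i)^2 \le \sum_i p_i t_i^2$ rearranges immediately to the claim. The only point of care is the handling of vanishing probabilities: with the convention that $x_i^2/p_i = +\infty$ whenever $p_i=0$ and $x_i>0$, the inequality is trivial in that case, while terms with $x_i=p_i=0$ may simply be dropped from the sum.

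There is no real obstacle here---the content of the lemma is Cauchy--Schwarz recast in the weighted form that will be convenient for the subsequent strong-contractibility argument, where the $x_i$ will represent the (unnormalized) fidelity or affinity contributions $X(K_i\rho K_i^\dagger, K_i\sigma K_i^\dagger)$ from each Kraus branch and the $p_i$ the corresponding branch probabilities. Consequently, I would keep the proof to essentially the two display lines above and simply cite Cauchy--Schwarz.
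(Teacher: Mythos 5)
Your proof is correct, but it takes a different route from the paper. You reduce the inequality directly to the Cauchy--Schwarz inequality (equivalently, Jensen for $t\mapsto t^2$) by writing $x_i=\bigl(x_i/\sqrt{p_i}\bigr)\sqrt{p_i}$ and using $\sum_i p_i=1$, whereas the paper gives a self-contained elementary argument: it expands $\sum_i x_i^2/p_i=\sum_i x_i^2+\sum_{i<j}\bigl(\tfrac{p_j}{p_i}x_i^2+\tfrac{p_i}{p_j}x_j^2\bigr)$ using $1/p_i=\sum_j p_j/p_i$, and then bounds each paired term from below by $2x_ix_j$ via AM--GM, recovering $\bigl(\sum_i x_i\bigr)^2$. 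The two arguments are essentially the same inequality seen at different levels of abstraction---the paper's expansion is in effect a from-scratch proof of the relevant weighted Cauchy--Schwarz instance---so what your version buys is brevity and a clear pointer to the standard result (and you also note explicitly how to handle $p_i=0$, which the paper's manipulation with $p_j/p_i$ tacitly assumes away), while the paper's version buys self-containedness without invoking any named inequality. Your closing remark about how the lemma is used in the strong-contractibility proof (with $x_i=X(K_i\rho K_i^\dagger,K_i\sigma K_i^\dagger)$ and $p_i$ the branch probabilities) matches the paper's subsequent application in Theorem~\ref{thm1}.
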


\begin{proof}
Suppose $\{p_i\}^n_{i=1}$ is a probability vector and $\{x_i\}^n_{i=1}$ are \(n\) positive real numbers. Then
\begin{align}
\sum_i\frac{x^2_i}{p_i}&=\sum_ix^2_i+\sum_i\sum_{j\ne i}\frac{p_j}{p_i}x^2_i \nonumber\\
&=\sum_ix^2_i+\sum_{i<j} \left(\frac{p_j}{p_i}x^2_i+\frac{p_i}{p_j}x^2_j \right) \nonumber\\
&\ge\sum_ix^2_i+2\sum_{i<j}x_ix_j= \left(\sum_ix_i \right)^2. \nonumber
\end{align}
\end{proof}

\begin{thm}\label{thm1}
$d_X(\rho,\sigma)$ satisfies strong contractibility.
\end{thm}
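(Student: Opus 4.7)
My plan is to unpack $d_X(\rho,\sigma)=1-X^2(\rho,\sigma)$ and reduce the claim to the single inequality $X^2(\rho,\sigma)\le\sum_i p_i X^2(\rho_i,\sigma_i)$, since $\sum_i p_i=1$ (as $\{K_i\}$ is a CPTP Kraus decomposition) and
\begin{equation*}
\sum_i p_i d_X(\rho_i,\sigma_i)=1-\sum_i p_i X^2(\rho_i,\sigma_i).
\end{equation*}
So strong contractibility is equivalent to this one weighted-square inequality for $X$.

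The starting point is property (P5), which gives
\begin{equation*}
X(\rho,\sigma)\le\sum_i X(K_i\rho K_i^{\dagger},K_i\sigma K_i^{\dagger}).
\end{equation*}
I would then invoke (P2) to pull out the normalizations, writing $X(K_i\rho K_i^{\dagger},K_i\sigma K_i^{\dagger})=\sqrt{p_iq_i}\,X(\rho_i,\sigma_i)$. Squaring both sides yields
\begin{equation*}
X^2(\rho,\sigma)\le\Bigl(\sum_i \sqrt{p_iq_i}\,X(\rho_i,\sigma_i)\Bigr)^{2}.
\end{equation*}

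Now I would apply Lemma~\ref{lem2} with the choice $x_i=\sqrt{p_iq_i}\,X(\rho_i,\sigma_i)\ge 0$ and probability weights $\{q_i\}$ (valid since $\sum_i q_i=1$), which gives
\begin{equation*}
\Bigl(\sum_i \sqrt{p_iq_i}\,X(\rho_i,\sigma_i)\Bigr)^{2}\le\sum_i\frac{x_i^2}{q_i}=\sum_i p_i X^2(\rho_i,\sigma_i).
\end{equation*}
Combining with the previous display yields the desired $X^2(\rho,\sigma)\le \sum_i p_i X^2(\rho_i,\sigma_i)$, and hence $\sum_i p_i d_X(\rho_i,\sigma_i)\le d_X(\rho,\sigma)$.

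There is no real obstacle here; the only subtle point is the choice of weights when invoking Lemma~\ref{lem2}. It is tempting (but fruitless) to try to factor $\sqrt{p_iq_i}$ symmetrically via Cauchy--Schwarz in a way that produces a mixture of both $p_i$ and $q_i$ on the right-hand side; one has to notice that by using the lemma with $\{q_i\}$ as the probability vector, the $q_i$'s cancel cleanly and leave precisely the $p_i$-weighted sum required by the definition of strong contractibility (and symmetrically, using $\{p_i\}$ would yield the $q_i$-weighted version). That selection is the key to matching the form of (R3) exactly.
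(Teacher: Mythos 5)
Your proposal is correct and follows essentially the same route as the paper's proof: it combines property (P5), property (P2) to extract the factor $\sqrt{p_iq_i}$, and Lemma~\ref{lem2} applied with the probability vector $\{q_i\}$ (your $x_i=\sqrt{p_iq_i}\,X(\rho_i,\sigma_i)$ is exactly the paper's $x_i=X(K_i\rho K_i^{\dagger},K_i\sigma K_i^{\dagger})$), differing only in that you chain the inequalities upward from $X^2(\rho,\sigma)$ rather than downward from $\sum_i p_i d_X(\rho_i,\sigma_i)$.
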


\begin{proof} We have
\begin{align}\label{eq:strong-monotonicity-theorem}
&\sum_ip_id_X(\rho_i,\sigma_i)=1-\sum_ip_i X^2(\rho_i,\sigma_i) \nonumber\\
\overset{(P2)}=&1-\sum_iq^{-1}_i X^2(K_i\rho K_i^{\dagger},K_i\sigma K^{\dagger}_i) \nonumber\\
\overset{Eq. (\ref{eq:lem2})} \le&1- \left(\sum_i X(K_i\rho K_i^{\dagger},K_i\sigma K^{\dagger}_i) \right)^2 \nonumber\\
\overset{(P5)} \le&1-X^2(\rho,\sigma) \nonumber\\
=&d_X(\rho,\sigma),
\end{align}
where  the second equality is due to property (P2), the first inequality follows from Lemma \ref{lem2} and the second inequality is due to property (P5) .
\end{proof}

It is not difficult to show that $d_X$ is also symmetric, i.e., Eq. (\ref{eq:strong-monotonicity-theorem}) holds if we replace $\{p_i\}$ with $\{q_i\}$ (this statement will not hold true for ``asymmetric distances'' like relative entropy \cite{Wehrl1978,Vedral2002}). $d_X$ is also bounded.

We can choose \(X\) to be either fidelity and affinity, and define
fidelity distance
\begin{align*}
d_F(\rho,\sigma):=1-F^2(\rho,\sigma);
\end{align*}
and affinity distance
\begin{align*}
d_A(\rho,\sigma):=1-A^2(\rho,\sigma).
\end{align*}
Both $d_F$ and $d_A$ satisfy the strong contractibility condition.

\begin{rem}
Relative entropy is a useful ``distance'' with several nice properties including strong contractibility \cite{vedral1998} and joint convexity \cite{Lindblad1974}, which means that the corresponding resource quantifier will satisfy properties (R1) through (R4). In fact, as a distance-based quantifier, relative entropy has been proven to be a bona fide resource measure of entanglement, quantum correlations beyond entanglement, quantum coherence, and superposition, and the operational meanings associated with them have been explored \cite{Henderson2000,Winter2016}. These resource theories can be studied from a unified perspective when relative entropy is employed to define a resource measure \cite{Modi2010,Yao2015}.
\end{rem}

\subsection{quantifying resource with fidelity and affinity}

We now employ the fidelity and affinity distances to define the corresponding resource quantifiers as follows:

(i) Fidelity resource, also called geometric resource, based on fidelity distance,
\begin{align}\label{eq2}
R_F(\rho):=\min_{\sigma\in\mathcal{FS}}d_F(\rho,\sigma);
\end{align}

(ii) Affinity resource based on affinity distance,
\begin{align}\label{eq3}
R_A(\rho):=\min_{\sigma\in\mathcal{FS}}d_A(\rho,\sigma).
\end{align}

Since both fidelity and affinity distances observe the strong monotonicity condition, the above resource quantifiers are strong resource monotones.

\begin{thm}\label{thm2}
Fidelity resource $R_F$ satisfies (R1-R4).
\end{thm}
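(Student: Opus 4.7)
The plan is to verify (R1)--(R4) in turn. The first three are essentially routine: the general arguments for distance-based $R_d$ already given in the excerpt apply verbatim to $d_F$. Namely, (R1) follows from (D1) and closedness of $\mathcal{FS}$; (R2) follows by picking an optimizer $\sigma^\star\in\mathcal{FS}$ for $\rho$, using Postulates~1 and~2 to get $\Phi(\sigma^\star)\in\mathcal{FS}$ and then applying contractivity (D2); and (R3) is precisely the chain displayed just below Eq.~(\ref{eq:strong-monotonicity-distance}), specialized to $d=d_F$ via Theorem~\ref{thm1}.

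The real content is convexity (R4). Given $\bar\rho=\sum_i p_i\rho_i$, pick $\sigma_i^\star\in\mathcal{FS}$ attaining $R_F(\rho_i)$ and set $F_i:=F(\rho_i,\sigma_i^\star)=\sqrt{1-R_F(\rho_i)}$. The usual distance-based convexity argument---taking $\bar\sigma=\sum_i p_i\sigma_i^\star$ and invoking joint concavity of $F^{2}$---is unavailable, because $F^{2}$ is \emph{not} jointly concave in general (simple classical three-outcome counterexamples exist). Instead I would invoke the strong concavity of fidelity (Nielsen--Chuang, Exercise~9.19),
\begin{align*}
F\!\left(\sum_i p_i\rho_i,\,\sum_i q_i\sigma_i^\star\right)\ge\sum_i\sqrt{p_iq_i}\,F_i,
\end{align*}
valid for any two probability distributions $\{p_i\},\{q_i\}$, and exploit the additional freedom in $\{q_i\}$.

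By Cauchy--Schwarz, $\sum_i\sqrt{p_iq_i}\,F_i\le\sqrt{\sum_i p_iF_i^2}$, with equality when $q_i\propto p_iF_i^2$. Setting $q_i^\star:=p_iF_i^2/\sum_j p_jF_j^2$ and $\sigma^\star_{\mathrm{conv}}:=\sum_i q_i^\star\sigma_i^\star\in\mathcal{FS}$ (Postulate~1), squaring the strong concavity inequality gives
\begin{align*}
F^{2}(\bar\rho,\sigma^\star_{\mathrm{conv}})\ge\sum_i p_iF_i^2=\sum_i p_i\bigl(1-R_F(\rho_i)\bigr),
\end{align*}
whence $R_F(\bar\rho)\le 1-F^{2}(\bar\rho,\sigma^\star_{\mathrm{conv}})\le\sum_i p_iR_F(\rho_i)$. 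The degenerate case $\sum_j p_jF_j^2=0$ forces every $R_F(\rho_i)=1$, in which case convexity is immediate since $R_F\le1$. The main obstacle throughout is that one cannot prove (R4) by direct joint concavity of $F^{2}$; the key move is to replace $p_i$ by the optimized weights $q_i^\star$ in strong concavity, which precisely delivers the squared inequality that (R4) requires.
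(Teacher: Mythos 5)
Your argument is correct, but for (R4) it takes a different route from the paper. The paper imports the equality \eqref{eq1} of Streltsov \emph{et al.} \cite{streltsov2010}, which says that $\max_{\sigma\in\mathcal{FS}}F^2(\rho,\sigma)$ equals its own convex roof; from this it first deduces the convex-roof expression \eqref{convex-roof} for $R_F$ and then gets convexity by merging optimal decompositions. You instead prove directly the one inequality that convexity actually needs, $\max_{\sigma\in\mathcal{FS}}F^{2}\bigl(\sum_i p_i\rho_i,\sigma\bigr)\ge\sum_i p_i F_i^2$, by combining strong concavity of the (root) fidelity with the Cauchy--Schwarz-optimal reweighting $q_i^\star\propto p_iF_i^2$ and convexity of $\mathcal{FS}$ (Postulate 1). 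This is essentially a self-contained re-derivation of the ``$\ge$'' half of \eqref{eq1} (and is close in spirit to how that lemma is proved in \cite{streltsov2010}); your observation that the naive choice $\bar\sigma=\sum_i p_i\sigma_i^\star$ fails because $F^2$ is not jointly concave (and joint concavity of $F$ alone only gives $(\sum_i p_iF_i)^2$, which is too weak) is accurate and is exactly the reason the reweighting is needed. What the paper's route buys is brevity and, as a by-product, the convex-roof formula \eqref{convex-roof} for $R_F$, which is of independent interest; what yours buys is a proof that does not rely on the external equality and isolates the minimal ingredient (strong concavity). Two small points to keep in order: attaining the minimizers $\sigma_i^\star$ (and faithfulness in (R1)) presupposes $\mathcal{FS}$ closed, as the paper notes in its remark on Postulate 1; and your handling of the degenerate case $\sum_j p_jF_j^2=0$ is fine since $R_F\le1$.
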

%
%
\begin{proof}
$R_F$ inherits properties (R1-R3) from $d_F$. Hence, we need to prove only convexity (R4) here.
As the maximum of square fidelity between a state $\rho$ and a convex state set is equal to the maximum over mixture of $\rho$ over the state set \cite{streltsov2010}, i.e.,
\begin{align}\label{eq1}
\max_{\sigma\in\mathcal{FS}}F^2(\rho,\sigma)=\max_{\rho=\sum_ip_i\rho_i}\sum_ip_i\max_{\sigma_i\in\mathcal{FS}}F^2(\rho_i,\sigma_i),
\end{align}
where the first maximization on the right-hand side is over all $\rho=\sum_ip_i\rho_i$, fidelity resource can be expressed as follows (see Ref. \cite{note3}):
\begin{align}\label{convex-roof}
R_F(\rho)=\min_{\rho=\sum_ip_i\rho_i}\sum_i p_i R_F(\rho_i).
\end{align}

Suppose $\rho_i=\sum_jq^i_j\rho^i_j$ is the optimal state decomposition for each $\rho_i$ in the sense that Eq. (\ref{convex-roof}) is satisfied. Then, for an arbitrary mixed quantum state $\rho=\sum_i p_i \rho_i \left(= \sum_{i,j}p_iq^i_j\rho^i_j \right)$, we have
\begin{align}
R_F \left(\sum_ip_i\rho_i \right)&=R_F(\rho)=R_F \left(\sum_{i,j}p_iq^i_j\rho^i_j \right) \nonumber \\
&\overset{Eq. (\ref{convex-roof})}\le \sum_{i,j}p_iq^i_j R_F(\rho^i_j) \nonumber \\
&\overset{Eq. (\ref{convex-roof})}=\sum_i p_i R_F(\rho_i). \nonumber
\end{align}

In conclusion, fidelity resource $R_F$ satisfies (R1-R4).
\end{proof}

\begin{rem}\label{rem4}
Affinity resource $R_A$ also satisfies (R1-R3) because of $d_A$.
But, since affinity does not satisfy Eq. (\ref{eq1}) and $d_A$ is not jointly convex, $R_A$ may not fulfil (R4). However, in coherence theory, the measure based on affinity distance satisfies convexity \cite{yucs2017}. On the other hand, the convex roof extension (which extends a resource quantifier for pure states to mixed states) of $R_A$ can be shown to satisfy (R1-R4).

Fidelity and affinity distances based resource measures have been proved to be bona fide resource measures in entanglement \cite{Wei2003} and coherence \cite{Streltsov2015B,yucs2017,Xiong2018B} resource theories.
\end{rem}

\section{fidelity partial coherence and quantum state discrimination} \label{fidelity-coherence}
\subsection{partial coherence theory}

Consider a bipartite quantum system ``ab'' with Hilbert space $H=H_a\otimes H_b$, where $H_a$ and $H_b$ are the Hilbert spaces of the subsystems ``a'' and ``b'' having finite dimensions $n_a$ and $n_b$ respectively. Let $\{\ket{i}\}$ be a fixed orthogonal basis of party ``a'', then $\Pi_L=\{\ket{i}\bra{i}\otimes I^b\}$ is the L{\"u}ders measurement and the notions for partial coherence respect to $\Pi_L$ are as follows:

(1) The set of partial ``incoherent'' or free states are defined by
\begin{align*}
I^a_P=\{\sigma:\Pi_L(\sigma)=\sigma\},
\end{align*}
where $\Pi_L(\sigma)=\sum_i (\Pi^a_i\otimes I^b) \sigma (\Pi^a_i\otimes I^b)$. Denoting $p_i=\mathrm{tr}\bra{i}\sigma\ket{i}$
and $\sigma_i=p^{-1}_i\bra{i}\sigma\ket{i}$ [here $\bra{i}\sigma\ket{i} \equiv(\bra{i}\otimes I_b)\sigma(\ket{i} \otimes I_b)$ for brevity; we observe similar notation elsewhere also], each partial incoherent state can be written as
\begin{align}\label{eq22}
\sigma=\sum_ip_i\ket{i}\bra{i}\otimes\sigma_i.
\end{align}

(2) A CPTP map $\Phi^a$ with Kraus operators $\{K_n\}$ is called partial incoherent if $K_nI^a_PK^{\dagger}_n\in I^a_P$, and the set of partial incoherent operations is denoted as $\mathcal{O}^a_P$.\\

A functional $C^a$ on the bipartite system, satisfying the conditions (C1-C4) below, is called a measure of partial coherence with respect to $\Pi_L$.

(C1) Nonnegativity: $C^a(\rho^{ab})\ge0$, and the equality holds if and only if $\sigma\in I^a_P$.

(C2) Monotonicity under partial incoherent operations: $C^a(\Phi^a(\rho^{ab}))\le C^a(\rho^{ab})$ for all $\Phi^a\in\mathcal{O}^a_P$.

(C3) Monotonicity under selective partial incoherent operations on average: $\sum_ip_iC^a(p^{-1}_iK_i\rho^{ab}K^{\dagger}_i)\le C^a(\rho^{ab})$ with probabilities $p_i=\mathrm{tr}(K_i\rho^{ab}K^{\dagger}_i)$ and partial incoherent Kraus operators $K_i$.

(C4) Convexity:  $C^a(\sum_ip_i\rho^{ab}_i)\le \sum_ip_iC^a(\rho^{ab}_i)$ for any ensemble $\{p_i, \rho^{ab}_i\}$ with $p_i\ge0$ and $\sum_ip_i=1$.

Based on \eqref{eq2} and \eqref{eq3}, we define fidelity (geometric) partial coherence by
\begin{align}\label{eq25}
C^a_F(\rho^{ab}):=\min_{\sigma\in I^a_P}d_F(\rho^{ab},\sigma);
\end{align}
and affinity partial coherence by
\begin{align}\label{eq26}
C^a_A(\rho^{ab}):=\min_{\sigma\in I^a_P}d_A(\rho^{ab},\sigma),
\end{align}
respectively. Theorem \ref{thm2} ensures that $C^a_F$ is a partial coherence measure and $C^a_A$ is a strong partial coherence monotone.

\subsection{quantum state discrimination}


In QSD task, the sender chooses a state randomly from the ensemble $\{\rho_i, \eta_i \}$ and sends it to the receiver, whose job is to determine the received state with maximal probability. To do so, he performs a positive operator valued measurement (POVM) $\{M_i : M_i \ge 0, \sum_i M_i =I\}$ on each $\rho_i$ and declares the state is $\rho_j$ when the measurement reads $j$. As the probability to get the result $j$ is $p_{j|i}=\mathrm{Tr}(M_j\rho_i)$ when the system is in the state $\rho_i$, the maximal success probability to identify $\{\rho_i,\eta_i\}$ is
\begin{align}
P^{opt}_S(\{\rho_i,\eta_i\})=\mathop{\mathrm{max}}_{\{M_i\}}\sum_i\eta_i\mathrm{Tr}(M_i\rho_i),\nonumber
\end{align}
where the maximization is performed over all POVM $\{M_i\}$, and the minimal error probability is
\begin{align}
P^{opt}_E(\{\rho_i,\eta_i\})=1-\mathop{\mathrm{max}}_{\{M_i\}}\sum_i\eta_i\mathrm{Tr}(M_i\rho_i).\nonumber
\end{align}

For an ensemble consisting of two states, the analytic formula of maximal success probability is given by Helstrom formula \cite{Helstrom1976}
\begin{align}\label{eq24}
P^{opt}_S(\{\rho_i,\eta_i\}^2_{i=1})=\frac{1}{2}(1-\mathrm{tr}|\Lambda|),
\end{align}
where $\Lambda=\eta_1\rho_1-\eta_2\rho_2~(=\sum_i \lambda_i \ket{i}\bra{i});~|\Lambda|:=\sum_i |\lambda_i| \ket{i}\bra{i}$, and the corresponding optimal measurement is a von Neumann measurement $\{\Pi^{opt}_1,I-\Pi^{opt}_1\}$ with $\Pi^{opt}_1$ being the projector onto the support of $\Lambda_{+}=(\Lambda+|\Lambda|)/2$. However, no solution is known for general case except some symmetric cases \cite{Bergou2004,Eldar2004,Chou2003}.

As a suboptimal choice, the least square measurement (LSM)  is an alternative to discriminate quantum states \cite{Belavkin1975a,Belavkin1975,Holevo1978,Hausladen1994,Hausladen1996,peres1991,Eldar2001}. In comparison to the optimal measurement, the LSM has several nice properties. First, its construction is relatively simple as it can be determined directly from the given ensemble. Second, it is very close to the optimal measurement when the states to be distinguished are {\it almost orthogonal} \cite{Holevo1978,Spehner2014}. The construction of LSM is as follows.

For the ensemble $\{\rho_i,\eta_i\}^n_{i=1}$, the least square measurements are given by 
\begin{align}\label{eq32}
M^{lsm}_i=\eta_i\rho_{out}^{-1/2}\rho_i\rho_{out}^{-1/2},i=1,2,...,n,
\end{align}
where $\rho_{out}=\sum_i \lambda_i \rho_i$.  As a result, the minimal error probability of this measurement is
\begin{align}\label{eq4}
P^{lsm}_E(\{\rho_i,\eta_i\})=1-\sum_i\eta_i\mathrm{Tr}(M^{lsm}_i\rho_i).
\end{align}

\subsection{fidelity partial coherence}

Assuming $\rho_i=\sum^m_j\lambda_{ij}\ket{\psi_{ij}}\bra{\psi_{ij}}$ is the spectral decomposition of $i$, the density matrices $\{\rho_i, 1\le i\le n\}$ are called {\it linearly independent} if $\{\ket{\psi_{ij}},1\le i\le n, 1\le j\le m \}$ are linearly independent.



It is well known that POVMs can outperform von Neumann measurements in quantum state discrimination task \cite{Peres1990,peres1991}. However, the von Neumann measurement has proved to be the optimal one so far in discriminating a collection of linearly independent states, both pure \cite{kennedy1973} and mixed \cite{Eldar2003}. With this observation, we can establish the relation between fidelity partial coherence and QSD.

\begin{thm}\label{thm4}
For any bipartite state $\rho^{ab}$, the fidelity partial coherence is given by
\begin{align}
C^a_F(\rho^{ab})=P^{opt (vN)}_E(\{\omega_i,\eta_i\}^{n_a}_{i=1}),
\end{align}
where $\omega_i=\eta^{-1}_i\sqrt{\rho^{ab}}\ket{i}\bra{i}\otimes I^b\sqrt{\rho^{ab}}$ with $\eta_i=\mathrm{tr}\bra{i}\rho^{ab}\ket{i}$ and  $P^{opt (vN)}_E(\{\omega_i,\eta_i\}^{n_a}_{i=1})$ is the minimal error probability to discriminate $\{\omega_i,\eta_i\}^{n_a}_{i=1}$ with von Neumann measurement.
%
As a result, fidelity partial coherence provides an upper bound for the minimum error probability to discriminate $\{\omega_i,\eta_i\}^{n_a}_{i=1}$,
\begin{align}\label{eq11}
C^a_F(\rho^{ab})\ge P^{opt}_E(\{\omega_i,\eta_i\}^{n_a}_{i=1}).
\end{align}
In particular, if $\{\omega_i\}^{n_a}_{i=1}$ are linearly independent then $C^a_F(\rho^{ab})$ is exactly the minimum error probability.
\end{thm}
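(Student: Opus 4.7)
My plan is to convert $\max_{\sigma\in I_P^a}F^2(\rho^{ab},\sigma)$ into the optimal QSD success probability for $\{\omega_i,\eta_i\}$ via three steps: (i) reduce to pure state decompositions using the convex-roof representation of $C_F^a$ established in Theorem~\ref{thm2}; (ii) reparameterize pure decompositions as rank-$1$ POVMs on $H^{ab}$; and (iii) upgrade the resulting POVM optimum to a von Neumann measurement on $H^{ab}$ via a Naimark-type realization.

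For a pure state $|\psi\rangle=\sum_k|k\rangle\otimes|\psi_k\rangle$ with $|\psi_k\rangle=\langle k|\psi\rangle\in H^b$, I would use Jozsa's identity $F^2(|\psi\rangle\langle\psi|,\sigma)=\langle\psi|\sigma|\psi\rangle$ and the parameterization $\sigma=\sum_k p_k|k\rangle\langle k|\otimes\sigma_k\in I_P^a$ to compute $\max_{\sigma\in I_P^a}F^2(|\psi\rangle\langle\psi|,\sigma)=\max_k\|\langle k|\psi\rangle\|^2$ (attained by a pure product $|k^\star\rangle\langle k^\star|\otimes|\hat\psi_{k^\star}\rangle\langle\hat\psi_{k^\star}|$ with $k^\star$ the optimal label). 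Combined with Eq.~\eqref{convex-roof}, this yields
\[
1-C_F^a(\rho^{ab})=\max_{\rho^{ab}=\sum_i p_i|\psi_i\rangle\langle\psi_i|}\sum_i p_i\max_k\|\langle k|\psi_i\rangle\|^2.
\]

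Next, every pure ensemble decomposition can be written $\sqrt{p_i}|\psi_i\rangle=\sqrt{\rho^{ab}}|v_i\rangle$ with $\{|v_i\rangle\langle v_i|\}$ a rank-$1$ POVM on $H^{ab}$ (summing to $I$ after a trivial extension on $\mathrm{supp}(\rho^{ab})^\perp$). A short computation using the definition of $\omega_k$ gives $p_i\|\langle k|\psi_i\rangle\|^2=\eta_k\langle v_i|\omega_k|v_i\rangle$. For each $i$ I would select the label $k^\star(i)$ maximizing $\eta_k\langle v_i|\omega_k|v_i\rangle$ and form the POVM elements $M_k:=\sum_{i:k^\star(i)=k}|v_i\rangle\langle v_i|$, which satisfy $\sum_k M_k=I^{ab}$. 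The objective then becomes $\sum_k\eta_k\mathrm{Tr}(\omega_kM_k)=P_S(\{M_k\})$. Since any POVM on $H^{ab}$ arises in this way (by eigendecomposing each element into rank-$1$ pieces), the maximum over decompositions coincides with $P^{opt}_S$ over all POVMs on $H^{ab}$, i.e.\ $C_F^a(\rho^{ab})=P^{opt}_E$ in the POVM sense.

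Finally, to upgrade this to a von Neumann measurement on $H^{ab}$, I would use the fact that each $\omega_i$ is supported in $\mathrm{supp}(\rho^{ab})$, so Naimark's dilation of the optimal POVM can be embedded inside $H^{ab}$ by adjoining suitable subspaces of $\mathrm{supp}(\rho^{ab})^\perp$ to build orthogonal projectors $\{\Pi_i\}$ with the same $\sum_i\eta_i\mathrm{Tr}(\Pi_i\omega_i)$. This gives $C_F^a(\rho^{ab})=P^{opt(vN)}_E$, whence $C_F^a\ge P^{opt}_E$ follows automatically; the refinement for linearly independent $\{\omega_i\}$ is then an immediate corollary of Eldar's theorem \cite{Eldar2003}, which makes vN globally optimal among POVMs so that $P^{opt(vN)}_E=P^{opt}_E$. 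The main obstacle I anticipate is precisely this Naimark embedding step: it is transparent when $\mathrm{rank}(\rho^{ab})\le n_b$ or in the linearly independent case, but in the general linearly dependent case one has to verify carefully that the minimal dilation dimension $\sum_i\mathrm{rank}(M_i)$ does not exceed $n_a n_b$, and that the extra projector mass can always be absorbed in $\mathrm{supp}(\rho^{ab})^\perp$ without altering the success probability.
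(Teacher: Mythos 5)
Your steps (i)--(iii) are sound and constitute a genuinely different route from the paper's: the paper simply invokes Theorem 3 of Spehner and Orszag \cite{spehner2013A} to evaluate $\max_{\sigma\in I^a_P}F(\rho^{ab},\sigma)$, whereas you combine the convex-roof identity \eqref{eq1} with the correspondence $\sqrt{p_i}\ket{\psi_i}=\sqrt{\rho^{ab}}\ket{v_i}$ between pure-state decompositions and rank-one POVMs. Granting \eqref{eq1} (which the paper itself uses), your argument correctly yields $C^a_F(\rho^{ab})=P^{opt}_E(\{\omega_i,\eta_i\})$ over \emph{all} POVMs, which immediately gives \eqref{eq11} (in fact with equality) and makes the linear-independence clause and the appeal to Eldar's theorem essentially automatic.

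The genuine gap is exactly the step you flag: the first displayed equality of the theorem, $C^a_F=P^{opt(vN)}_E$, requires showing that a von Neumann measurement attains your POVM optimum, and your Naimark-embedding argument does not establish this. There is no a priori guarantee that an optimal POVM for $\{\omega_i,\eta_i\}$ can be chosen with $\sum_i\mathrm{rank}(M_i)\le n_an_b$, so the dilation need not fit inside $H_a\otimes H_b$ with the inclusion of $\mathrm{supp}(\rho^{ab})$ as the Naimark isometry; as written, you have only proved $C^a_F\le P^{opt(vN)}_E$. The missing inequality is supplied by the argument behind the result the paper cites: writing $F(\rho^{ab},\sigma)=\max_V|\mathrm{Tr}(V\sqrt{\rho^{ab}}\sqrt{\sigma})|$ for block-diagonal $\sigma$ and applying Cauchy--Schwarz blockwise gives
\begin{align*}
\max_{\sigma\in I^a_P}F^2(\rho^{ab},\sigma)\le\max_{V}\sum_i\eta_i\,\mathrm{tr}\bigl[V^{\dagger}(\ket{i}\bra{i}\otimes I^b)V\,\omega_i\bigr]\le P^{opt(vN)}_S,
\end{align*}
with no dilation needed, since the projectors $V^{\dagger}(\ket{i}\bra{i}\otimes I^b)V$ are already a von Neumann measurement on $H_a\otimes H_b$. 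Adding this one bound to your chain closes the gap, and in fact shows that your stronger POVM equality and the von Neumann equality coincide for every $\rho^{ab}$, not only in the linearly independent case.
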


\begin{proof}First, we evaluate fidelity partial coherence of $\rho^{ab}$.

Based on Theorem 3 in Ref. \cite{spehner2013A}, the fidelity between $\rho^{ab}$ and the partial incoherent states is
\begin{align*}
F(\rho^{ab}):&=\max_{\sigma\in I^a_P}F(\rho^{ab},\sigma)\\
&=\max_{\{\pi_i\}}\sqrt{\sum_i\mathrm{tr}[\pi_i\sqrt{\rho^{ab}}\ket{i}\bra{i}\otimes I^b\sqrt{\rho^{ab}}]}\\
&=\sqrt{P^{opt (vN)}_S(\{\omega_i,\eta_i\})},
\end{align*}
where $\omega_i=\eta^{-1}_i\sqrt{\rho^{ab}}\ket{i}\bra{i}\otimes I^b\sqrt{\rho^{ab}}$ with $\eta_i=\mathrm{tr}\bra{i}\rho^{ab}\ket{i}$ and ${\{\pi_i\}}^{n_a}_{i=1}$ is a von Neumann measurement on $\mathcal{H}_a$. Moreover, the closest partial incoherent state (CPIS) of $\rho$ is given by
\begin{align}\label{eq27}
\sigma_{\rho}=\frac{1}{F^2(\rho^{ab})}\sum_i\ket{i}\bra{i}\otimes\bra{i}\sqrt{\rho^{ab}}\pi^{opt}_i\sqrt{\rho^{ab}}\ket{i},
\end{align}
where $\{\pi^{opt}_i\}^n_{i=1}$ is the optimal von Neumann measurement on system a.

We denote $\{\omega_i,\eta_i\}$ as the QSD-task of bipartite quantum state $\rho^{ab}$. Since
$P^{opt (vN)}_S(\{\omega_i,\eta_i\})\le P^{opt}_S(\{\omega_i,\eta_i\})$, we have
\begin{align*}
C^a_F(\rho^{ab})&=1-F^2(\rho^{ab})\\
&=1-P^{opt (vN)}_S(\{\omega_i,\eta_i\})\\
&\ge1-P^{opt}_S(\{\omega_i,\eta_i\})=P^{opt}_E(\{\omega_i,\eta_i\}^{n_a}_{i=1}).
\end{align*}

Next, we consider the ensemble $\{\omega_i,\eta_i\}^{n_a}_{i=1}$.

If $\eta_i\ne0$, for all $i$, this means the ensemble contains $n_a$ states. If $\{\omega_i\}^{n_a}_{i=1}$ are linearly independent, then the optimal measurement is von Neumann measurement \cite{Eldar2003}, that is,
  \begin{align}
  P^{opt}_E(\{\omega_i,\eta_i\}^{n_a}_{i=1})=P^{opt (vN)}_E(\{\omega_i,\eta_i\}^{n_a}_{i=1})=C^a_F(\rho^{ab}).\nonumber
  \end{align}

 If $s$ number of $\eta_i$ are zero then
 the ensemble is $\{\omega_{i^{\prime}},\eta_{i^{\prime}}\}^{n_a-s}_{i^{\prime}=1}$. As above, if $\{\omega_{i^{\prime}}\}^{n_a-s}_{i^{\prime}=1}$ are linearly independent,
 \begin{align*}
 C^a_F(\rho^{ab})&=1-\max_{\{\pi_i\}}\sum^{n_a}_{i=1}\eta_i\mathrm{tr}(\pi_i\omega_i)\\
 &=1-\max_{\{\pi_{i^{\prime}}\}}\sum^{n_a-s}_{i^{\prime}=1}\eta_{i^{\prime}}\mathrm{tr}(\pi_{i^{\prime}}\omega_{i^{\prime}})\\
 &=P^{opt (vN)}_E(\{\omega_{i^{\prime}},\eta_{i^{\prime}}\}^{n_a-s}_{i^{\prime}=1}),
 \end{align*}
 we have
 \begin{align*}
  C^a_F(\rho^{ab})=P^{opt}_E(\{\omega_{i^{\prime}},\eta_{i^{\prime}}\}^{n_a-s}_{i^{\prime}=1}).
 \end{align*}

\end{proof}

\begin{rem}
We have the QSD task $\omega_i=\eta^{-1}_i\sqrt{\rho^{ab}}\ket{i}\bra{i}\otimes I^b\sqrt{\rho^{ab}}$ with $\eta_i=\mathrm{tr}\bra{i}\rho^{ab}\ket{i}$.
If $\dim(\mathcal{H}_b)=1$ (that is, $\rho^{ab}$ reduces to a single quantum system $\rho^a$), then the corresponding ensemble constitutes a pure state discrimination task, which is consistent with coherence theory \cite{Xiong2018A}. In this sense, partial coherence theory is a more general framework to investigate QSD.
\end{rem}

\begin{cor}\label{cor11}(\cite{spehner2013A})
If $\rho>0$, the fidelity partial coherence is equal to the minimum error probability to discriminate $\{\omega_i,\eta_i\}^{n_a}_{i=1}$, that is,
\begin{align}\label{eq23}
C^a_F(\rho^{ab})=P^{opt}_E(\{\omega_i,\eta_i\}^{n_a}_{i=1}).
\end{align}
\end{cor}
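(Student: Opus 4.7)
The plan is to derive the corollary immediately from Theorem \ref{thm4}. That theorem already gives the equality
\[
C^a_F(\rho^{ab}) = P^{opt(vN)}_E(\{\omega_i,\eta_i\}^{n_a}_{i=1}),
\]
and asserts that whenever the ensemble $\{\omega_i\}_{i=1}^{n_a}$ is linearly independent (in the eigenvector sense defined just before Theorem \ref{thm4}), the von Neumann optimum coincides with the true POVM optimum. So the only thing left to verify under the assumption $\rho^{ab}>0$ is linear independence of the $\omega_i$.

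First, I would observe that positivity of $\rho^{ab}$ forces $\sqrt{\rho^{ab}}$ to be invertible. In particular every $\eta_i = \mathrm{tr}\bra{i}\rho^{ab}\ket{i}$ is strictly positive, so the ensemble really contains $n_a$ nonzero states, and each $\omega_i = \eta_i^{-1}\sqrt{\rho^{ab}}(\ket{i}\bra{i}\otimes I^b)\sqrt{\rho^{ab}}$ has support
\[
V_i := \sqrt{\rho^{ab}}\bigl(\ket{i}\otimes \mathcal{H}_b\bigr),
\]
which is a subspace of dimension exactly $n_b$ because $\sqrt{\rho^{ab}}$ is invertible.

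Next I would argue the key linear independence. The subspaces $\ket{i}\otimes\mathcal{H}_b$ for $i=1,\dots,n_a$ form an internal direct sum decomposition of $\mathcal{H}_a\otimes\mathcal{H}_b$. Applying the invertible linear map $\sqrt{\rho^{ab}}$ preserves this property, so
\[
\mathcal{H}_a\otimes\mathcal{H}_b \;=\; V_1 \oplus V_2 \oplus \cdots \oplus V_{n_a},
\]
with $\sum_i \dim V_i = n_a n_b$. Consequently, if one picks for each $i$ an eigenbasis of $\omega_i$ on its support $V_i$, the union over $i$ is linearly independent in $\mathcal{H}_a\otimes\mathcal{H}_b$. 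This is exactly the definition of linear independence of the density matrices $\{\omega_i\}$ used in this section.

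Having established linear independence, the second part of Theorem \ref{thm4} applies and yields $C^a_F(\rho^{ab}) = P^{opt}_E(\{\omega_i,\eta_i\}^{n_a}_{i=1})$, which is the desired equality. I do not foresee a genuine obstacle: the only substantive point is the direct-sum computation, which is a clean consequence of invertibility of $\sqrt{\rho^{ab}}$; the rest is bookkeeping with Theorem \ref{thm4}.
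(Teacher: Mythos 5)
Your proof is correct and follows essentially the same route as the paper: reduce the corollary to linear independence of $\{\omega_i\}^{n_a}_{i=1}$ and invoke the second part of Theorem \ref{thm4}, with the linear independence coming from the invertibility of $\sqrt{\rho^{ab}}$ (which also guarantees $\eta_i>0$ for all $i$). The only difference is presentational: you phrase the key step via the supports $V_i=\sqrt{\rho^{ab}}\bigl(\ket{i}\otimes\mathcal{H}_b\bigr)$ forming a direct sum of $\mathcal{H}_a\otimes\mathcal{H}_b$, while the paper constructs the eigenvectors $\ket{\zeta_{ij}}\propto\sqrt{\rho^{ab}}\,\ket{i}\otimes\ket{\xi_{ij}}$ explicitly and checks their independence directly; the underlying mechanism is identical.
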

\begin{proof}
If $\rho>0$, then each $\omega_i$ is full rank and $\eta_i>0$. Supposing every Hermitian matrix $R_i:=\bra{i}\rho^{ab}\ket{i}$ has spectral decomposition $R_i=\sum_j\lambda_{ij}\eta_i\ket{\xi_{ij}}\bra{\xi_{ij}}$ where $\ket{\xi_{ij}}\in\mathcal{H}_b$, it is easy to check that
\begin{align}
\ket{\zeta_{ij}}:=(\lambda_{ij}\eta_i)^{-1}\sqrt{\rho^{ab}}\ket{i}\otimes\ket{\xi_{ij}}
\end{align}
is an eigenvector of $\omega_i$ with eigenvalue $\lambda_{ij}>0$. Hence, for
\begin{align}
\sum_{ij}x_{ij}\ket{\zeta_{ij}}=\sqrt{\rho^{ab}} \left(\sum_{ij} \frac{x_{ij}}{\lambda_{ij}\eta_i}\ket{i}\otimes\ket{\xi_{ij}} \right)=0,
\end{align}
the invertibility of $\rho$ and orthogonality of $\{\ket{i}\}$, $\{\ket{\xi_{ij}}\}$ indicate that $x_{ij}=0$ for each $i,j$. As a result, $\{\rho_i,i=1,...,n_a\}$ are linearly independent and we obtain the result \eqref{eq23} using Theorem \ref{thm4}.
\end{proof}

\subsection{quantum state discrimination and partial coherence}

We show that estimating fidelity partial coherence of a bipartite quantum state can be regarded as a QSD task. In this connection we ask, whether for a given QSD ensemble, there exist a quantum state whose fidelity partial coherence provides an upper bound for the minimum error probability of QSD?

Let us consider a state discrimination task $\{\rho_i,\eta_i\}^n_{i=1}$ where each $\rho_i$ is an $m\times m$ density matrix. Then we consider an $mn\times mn$ matrix $\rho$ whose $(i,j)$-th entry is a block which reads $\rho_{ij}=\sqrt{\eta_i\rho_i}\sqrt{\eta_j\rho_j}$, $1\le i,j\le n$, that is,
 \begin{align}\label{eq21}
 \rho=\begin{pmatrix}
\eta_1\rho_1&\sqrt{\eta_1\rho_1}\sqrt{\eta_2\rho_2}&...& \sqrt{\eta_1\rho_1}\sqrt{\eta_n\rho_n}\\
\sqrt{\eta_2\rho_2}\sqrt{\eta_1\rho_1}&\eta_2\rho_2&...&\sqrt{\eta_2\rho_2}\sqrt{\eta_n\rho_n}\\
.&.&...&.\\.&.&...&.\\
\sqrt{\eta_n\rho_n}\sqrt{\eta_1\rho_1}&\sqrt{\eta_n\rho_n}\sqrt{\eta_2\rho_2}&...&\eta_n\rho_n
\end{pmatrix}.
\end{align}

\begin{prop}
The matrix $\rho$ is a density matrix.
\end{prop}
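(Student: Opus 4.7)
The plan is to exhibit $\rho$ as an outer product $\rho = V V^{\dagger}$, from which Hermiticity and positive semi-definiteness follow immediately, and then to check $\mathrm{tr}(\rho)=1$ by summing the diagonal blocks.

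Concretely, I would introduce the $mn\times m$ rectangular matrix obtained by vertically stacking the (Hermitian) positive square roots,
\begin{align*}
V := \begin{pmatrix} \sqrt{\eta_1\rho_1} \\ \sqrt{\eta_2\rho_2} \\ \vdots \\ \sqrt{\eta_n\rho_n} \end{pmatrix}.
\end{align*}
Since each $\sqrt{\eta_i\rho_i}$ is self-adjoint, the $(i,j)$-block of $V V^{\dagger}$ is $\sqrt{\eta_i\rho_i}(\sqrt{\eta_j\rho_j})^{\dagger} = \sqrt{\eta_i\rho_i}\sqrt{\eta_j\rho_j}$, which matches the block $\rho_{ij}$ in Eq.~(\ref{eq21}). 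Hence $\rho = V V^{\dagger}$, and matrices of this form are automatically Hermitian with $\rho \ge 0$.

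For normalization, only the diagonal blocks contribute to the trace, so $\mathrm{tr}(\rho)=\sum_{i}\mathrm{tr}(\eta_i\rho_i)=\sum_{i}\eta_i\,\mathrm{tr}(\rho_i)=\sum_{i}\eta_i=1$, using that each $\rho_i$ has unit trace and $\{\eta_i\}$ is the prior probability distribution of the QSD ensemble.

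There is essentially no substantive obstacle here; the only step requiring a moment of care is to invoke the self-adjointness of $\sqrt{\eta_j\rho_j}$, which is exactly what lets the daggered block of $V$ reproduce Eq.~(\ref{eq21}) without spurious conjugates. If one prefers a coordinate-free phrasing, pick any orthonormal basis $\{|e_k\rangle\}$ of $\mathbb{C}^m$ and set $|w_k\rangle := \sum_i |i\rangle \otimes \sqrt{\eta_i\rho_i}|e_k\rangle$; then $\rho = \sum_k |w_k\rangle\langle w_k|$ is manifestly a positive operator of rank at most $m$, and the trace collapses as above.
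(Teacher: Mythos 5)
Your proposal is correct and is essentially the paper's own argument: the paper writes $\rho=A^{\dagger}A$ with $A=(\sqrt{\eta_1\rho_1},\dots,\sqrt{\eta_n\rho_n})$, and your $V$ is just $A^{\dagger}$, so $\rho=VV^{\dagger}$ is the same Gram-type factorization, followed by the identical trace computation.
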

\begin{proof}
Consider an $m\times mn$ matrix
\begin{align}
A=(\sqrt{\eta_1\rho_1},\sqrt{\eta_2\rho_2},...,\sqrt{\eta_n\rho_n}),
\end{align}
 then $\rho=A^{\dagger}A$ is positive semidefinite. As $\mathrm{tr}\rho=\sum_i \mathrm{tr} (\eta_i \rho_i)=\sum_i \eta_i=1$, we conclude that $\rho$ is a density matrix.
\end{proof}

Therefore, we call state \eqref{eq21} the QSD-state of $\{\rho_i,\eta_i\}^n_{i=1}$.

Based on Theorem \ref{thm4}, the corresponding QSD ensemble of $\rho$ is $\{\omega_i,p_i\}^n_{i=1}$, where
\begin{align}
p_i=\mathrm{tr}\sqrt{\rho}\ket{i}\bra{i}\otimes I_m\sqrt{\rho},~\omega_i=p^{-1}_i\sqrt{\rho}\ket{i}\bra{i}\otimes I_m\sqrt{\rho}.\nonumber
\end{align}

Using polar decomposition theorem, one can find an $m\times mn$ unitary matrix $U$ such that $A=U\sqrt{\rho}$. As a result, for each $1\le i\le n$,
\begin{align*}
p_i&=\mathrm{tr}\sqrt{\rho}\ket{i}\bra{i}\otimes I_m\sqrt{\rho}\\
&=\mathrm{tr}U^{\dagger}(A\ket{i}\bra{i}\otimes I_mA^{\dagger})U\\
&=\mathrm{tr}(\eta_iU^{\dagger}\rho_iU)=\eta_i,
\end{align*}
and $\omega_i=U^{\dagger}\rho_iU$. Moreover,
\begin{align*}
P^{opt}_S(\{\rho_i,\eta_i\}^n_{i=1})&=\max_{\{M_i\}^n_{i=1}}\sum_i\eta_i\mathrm{tr}(M_i\rho_i)\\
&=\max_{\{M_i\}^n_{i=1}}\sum_i\eta_i\mathrm{tr}M_iU^{\dagger}\omega_iU\\
&=\max_{\{N_i\}^n_{i=1}}\sum_i\eta_i\mathrm{Tr}(N_i\omega_i)\\
&=P^{opt}_S(\{\omega_i,\eta_i\}^n_{i=1}).
\end{align*}
The last equality is due to the fact that if $\{M_i\}^n_{i=1}$ is a POVM on $\mathcal{H_A}$, then $\{UM_iU^{\dagger}\}^n_{i=1}$ is a POVM on $\mathcal{H}_A\otimes\mathcal{H}_B$. Therefore, $\{UM^{opt}_iU^{\dagger}\}^n_{i=1}$ is an optimal measurement for QSD task $\{\omega_i,\eta_i\}^n_{i=1}$ when $\{M^{opt}_i\}^n_{i=1}$ is optimal to discriminate $\{\rho_i,\eta_i\}^n_{i=1}$.

Thus, we have the following result.
\begin{thm}\label{thm5}
Let $\mathcal{H}_a$ and $\mathcal{H}_b$ are Hilbert spaces of systems a and b with $\dim(\mathcal{H}_a)=n$ and $\dim(\mathcal{H}_b)=m$. 
For a set of quantum states $\rho_i,i=1,...,n$ of system a, the minimal error probability to discriminate $\{\rho_i,\eta_i\}^n_{i=1}$ is upper bounded by the fidelity partial coherence with respect to $\{\ket{i}\bra{i}\otimes I_m,i=1,...,n\}$ of the corresponding QSD-state $\rho$, that is,
\begin{align}
P^{opt}_E(\{\rho_i,\eta_i\}^n_{i=1})\le C^a_F(\rho),
\end{align}
where $\{\ket{i}\}^n_{i=1}$ is the computational basis of $\mathcal{H}_a$ . The bound is saturated when these states are linearly independent.

\end{thm}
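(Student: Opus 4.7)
The plan is to consolidate the calculations developed in the paragraphs immediately preceding the theorem with the bound furnished by Theorem~\ref{thm4}. First I would recall the construction: the QSD-state $\rho$ of Eq.~(\ref{eq21}) factorises as $\rho=A^{\dagger}A$ with $A=(\sqrt{\eta_1\rho_1},\dots,\sqrt{\eta_n\rho_n})$, and the polar decomposition yields an $m\times mn$ co-isometry $U$ (i.e.\ $UU^{\dagger}=I_m$) such that $A=U\sqrt{\rho}$. The computation performed just above the theorem then shows that the intrinsic QSD ensemble of $\rho$ is $\{\omega_i,\eta_i\}_{i=1}^{n}$ with $\omega_i=U^{\dagger}\rho_iU$, and that the correspondence $M_i\mapsto UM_iU^{\dagger}$ bijects POVMs on $\mathcal{H}_a\otimes\mathcal{H}_b$ with POVMs on $\mathcal{H}_a$ while preserving the success probability. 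Hence
\begin{equation*}
P^{opt}_E(\{\rho_i,\eta_i\}_{i=1}^{n})=P^{opt}_E(\{\omega_i,\eta_i\}_{i=1}^{n}).
\end{equation*}

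Next I would invoke Theorem~\ref{thm4} applied to the bipartite state $\rho$, which gives
\begin{equation*}
C^a_F(\rho)\ge P^{opt}_E(\{\omega_i,\eta_i\}_{i=1}^{n}),
\end{equation*}
with equality provided the $\omega_i$ are linearly independent. Chaining this with the previous equality immediately yields the claimed bound $P^{opt}_E(\{\rho_i,\eta_i\}_{i=1}^{n})\le C^a_F(\rho)$.

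For the saturation claim it remains to show that linear independence of $\{\rho_i\}$ (in the spectral sense used in the paper) is inherited by $\{\omega_i\}$. Writing the spectral decomposition $\rho_i=\sum_j\lambda_{ij}|\psi_{ij}\rangle\langle\psi_{ij}|$, one has $\omega_i=\sum_j\lambda_{ij}(U^{\dagger}|\psi_{ij}\rangle)(U^{\dagger}|\psi_{ij}\rangle)^{\dagger}$, and since $UU^{\dagger}=I_m$ the map $U^{\dagger}$ is an isometry on $\mathcal{H}_a$, and therefore sends linearly independent sets to linearly independent sets. Invoking the equality clause of Theorem~\ref{thm4} then upgrades the inequality to $C^a_F(\rho)=P^{opt}_E(\{\omega_i,\eta_i\}_{i=1}^{n})=P^{opt}_E(\{\rho_i,\eta_i\}_{i=1}^{n})$.

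The only point that requires a moment's care, I expect, is this preservation of linear independence under the non-square matrix $U$; but the relation $UU^{\dagger}=I_m$ forces $U^{\dagger}$ to be an isometry, which settles the matter cleanly, so no serious obstacle stands in the way and the proof becomes essentially a bookkeeping of Theorem~\ref{thm4} together with the unitary reduction already performed for the QSD-state.
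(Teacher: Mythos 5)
Your proposal is correct and follows essentially the same route as the paper: identify the intrinsic ensemble $\{\omega_i,\eta_i\}$ of the QSD-state via the polar decomposition $A=U\sqrt{\rho}$, use the (co-)isometric equivalence $\rho_i=U\omega_iU^{\dagger}$ to equate the optimal error probabilities, and then invoke Theorem~\ref{thm4}. Your explicit verification that linear independence of the eigenvectors of $\{\rho_i\}$ is inherited by $\{\omega_i\}$ because $U^{\dagger}$ is an isometry (a point the paper leaves implicit, even calling the rectangular $U$ ``unitary'') is a welcome tightening rather than a deviation.
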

\begin{proof}

Using Theorem \ref{thm4}, we have
\begin{align}
C^{a}_F(\rho)=P^{opt (vN)}_E(\{\omega_i,p_i\}^n_{i=1}),\nonumber
\end{align}
where $\omega_i=\eta^{-1}_i\sqrt{\rho^{ab}}\ket{i}\bra{i}\otimes I_m\sqrt{\rho^{ab}}$ and $p_i=\eta_i$. Since
$\omega_i=U\rho_iU^{\dagger}$ for an unitary matrix $U$, then
\begin{align}
P^{opt}_E(\{\rho_i, \eta_i\}^n_{i=1})=P^{opt}_E(\{\omega_i, \eta_i\}^n_{i=1}).\nonumber
\end{align}
 Therefore, we have
 \begin{align}
 C^a_F(\rho)\ge P^{opt}_E(\{\rho_i, \eta_i\}^n_{i=1}),\nonumber
 \end{align}
 and the equality holds for linearly independent states $\{\ket{\rho_i}\}^n_{i=1}$.
\end{proof}

\subsection{fidelity partial coherence for $(2,n)$ bipartite X states}

In this section, we compute the analytic expression of fidelity partial coherence for X-states.
In two-qubit case, X-states including Bell-diagonal states constitute an important class of states which play a crucial role in the quantification and dynamics of entanglement, quantum correlations and coherence \cite{Vedral1997,luo2008A,Dakic2010,M.Ali2010A,Rau2009,Bromley2015}.

First, we consider the two-qubit case. The density matrix of a two-qubit X-state in the standard orthogonal basis $\{\ket{00},\ket{01},\ket{10},\ket{11}\}$ is of the general form
\begin{align}\label{eq37}
\rho_X=\begin{pmatrix}a&0&0&y\\0&b&x&0\\0&\overline{x}&c&0\\\overline{y}&0&0&d
\end{pmatrix}.
\end{align}
Therefore,
\begin{align}
\Lambda=\eta_1\rho_1-\eta_2\rho_2=\sqrt{\rho_X}\sigma_z\otimes I\sqrt{\rho_X},
\end{align}
where $\sigma_z=\ket{0}\bra{0}-\ket{1}\bra{1}$ is the Pauli matrix. $\Lambda$ has the same eigenvalues as $\sigma_z\otimes I\rho_X$,  whose eigenvalues and corresponding eigenvectors are
\begin{align}
&\lambda_{1(2)}=\frac{1}{2}(b-c\mp\sqrt{(b+c)^2-4|x|^2}), \\ &\lambda_{3(4)}=\frac{1}{2}(a-d\mp\sqrt{(a+d)^2-4|y|^2}),
\end{align}
and
\begin{align*}
&\ket{\psi_{1(2)}}=(0,(b+c)\mp\sqrt{(b+c)^2-4|x|^2},-2\overline{x},0)^{T},\\
&\ket{\psi_{3(4)}}=((a+d)\mp\sqrt{(a+d)^2-4|y|^2},0,0,-2\overline{y})^{T},
\end{align*}
respectively. Hence, we have
\begin{align*}
&P^{opt}_S(\{\omega_i,\eta_i\})=\frac{1}{2}(1+\mathrm{tr}|\Lambda|) \\
&=\frac{1}{2}(1+\sqrt{(b+c)^2-4|x|^2}+\sqrt{(a+d)^2-4|y|^2}).
\end{align*}

If $bc\neq|x|^2,ad\neq|y|^2$, that is, $\rho_X>0$, one has
\begin{align*}
C^a_F(\rho_X)&=P^{opt (vN)}_E(\{\omega_i,\eta_i\})\\
&=P^{opt}_E(\{\omega_i,\eta_i\})\\
&=\frac{1}{2}(1-\sqrt{(b+c)^2-4|x|^2}-\sqrt{(a+d)^2-4|y|^2}).
\end{align*}

And the closest partial incoherent state is given by Eq. $\eqref{eq27}$ with optimal projectors
\begin{align*}
\pi^{opt}_1&=\sqrt{\rho_X}(\ket{\psi_2}\bra{\psi_2}+\ket{\psi_4}\bra{\psi_4})\sqrt{\rho_X},\nonumber \\
\pi^{opt}_2&=I-\pi^{opt}_1,
\end{align*}
where $\ket{\psi_{2(4)}}$ are the normalized eigenvectors.

Nest, we consider $(2,n)$ bipartite quantum systems. Any $2n\times2n$ quantum state $\rho$ is called X-state if it can be represented as an X matrix in a fixed orthogonal basis  $\{\ket{i}\}^{2n}_{i=1}$ as
\begin{align}
\rho=\begin{pmatrix}
\rho_{11} &0&.&.&0&\rho_{1,2n}\\0&\rho_{22}&.&.&\rho_{2,2n-1}&0\\.&.&.&.&.&.\\
.&.&.&.&.&.\\0&\rho_{2n-1,1}&.&.&\rho_{2n-1,2n-1}&0\\ \rho_{2n,1} &0&.&.&0&\rho_{2n,2n}
\end{pmatrix}.
\end{align}

Similar to the $n=2$ case, the fidelity partial coherence of invertible $\rho$ is
\begin{align*}
C^a_F(\rho)=\frac{1}{2}(1-\sum^n_{i=1}\sqrt{(\rho_{ii}+\rho_{2n+1-i,2n+1-i})^2-4|\rho_{2n+1-i,i}|^2}).
\end{align*}

\section{affinity partial coherence and quantum state discrimination}\label{affinity-coherence}

In this section, firstly we evaluate affinity partial coherence for a bipartite quantum state $\rho^{ab} \in \mathcal{H}_a\otimes\mathcal{H}_b$. Since each partial incoherent state can be written as
\begin{align}
  \sigma=\sum_{ij}p_{ij}\ket{i}\bra{i}\otimes\ket{\psi_{j|i}}\bra{\psi_{j|i}},
\end{align}
affinity between $\rho^{ab}$ and partial incoherent states $\sigma \in I^a_P$ is given by
\begin{align*}
A(\rho^{ab}):&=\max_{\sigma\in I^a_P}A(\rho^{ab},\sigma)=\max_{\sigma\in I^a_P}\mathrm{tr}\sqrt{\rho^{ab}}\sqrt{\sigma}\\
&=\max_{\{\ket{\psi_{j|i}}\}}\sum_{ij}\sqrt{p_{i,j}}\bra{i\otimes\psi_{j|i}}\sqrt{\rho^{ab}}\ket{i\otimes\psi_{j|i}}\\
&\le\max_{\{\ket{\psi_{j|i}}\}}\sqrt{\sum_{ij}\bra{\psi_{j|i}}B_i\ket{\psi_{j|i}}^2}\\
&=\sqrt{\sum_{i}\mathrm{tr}B^2_i},
\end{align*}
where $B_i=\bra{i}\otimes I_b\sqrt{\rho^{ab}}\ket{i}\otimes I_b$. While the inequality is due to the Cauchy-Schwarz inequality, the last equality holds when $\ket{\psi_{j|i}}$ is the eigenvector of $B_i$ for each $i$. As a result,
\begin{align}
C^a_A(\rho^{ab})=1-\sum_{i}\mathrm{tr}(\bra{i}\otimes I_b\sqrt{\rho^{ab}}\ket{i}\otimes I_b)^2,
\end{align}
and the corresponding CPIS is
\begin{align}
\sigma=\sum_{i,j}q_{ij}\ket{i}\bra{i}\otimes\ket{\beta_{j|i}}\bra{\beta_{j|i}}
\end{align}
with
\begin{align*}
q_{ij}=\frac{\bra{i\otimes\beta_{j|i}}\sqrt{\rho^{ab}}\ket{i\otimes\beta_{j|i}}^2}{\sum_{i,j}\bra{i\otimes\beta_{j|i}}\sqrt{\rho^{ab}}\ket{i\otimes\beta_{j|i}}^2}
\end{align*}
and $\ket{\beta_{j|i}}$ is the optimal eigenvector of $B_i$ for each $i$.
\begin{thm}
$C^a_A$ is a partial coherence measure.
\end{thm}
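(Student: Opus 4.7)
The proof reduces to verifying (C1)--(C4). Conditions (C1)--(C3) are already in hand: the remarks preceding the present section state that Theorem~\ref{thm2}, together with Remark~\ref{rem4}, ensures $C^a_A$ is a strong partial coherence monotone, which by definition entails nonnegativity with $C^a_A(\rho)=0$ iff $\rho\in I^a_P$, monotonicity under any $\Phi^a\in\mathcal{O}^a_P$, and strong monotonicity on average under selective partial incoherent Kraus operators. So the substantive step is the convexity condition (C4).

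My plan for (C4) is to discard the minimization definition (for which $d_A$ is not jointly convex, as flagged in Remark~\ref{rem4}) and instead work directly with the closed form just obtained,
\begin{equation}
C^a_A(\rho^{ab})=1-\sum_i\mathrm{tr}(B_i^2),\qquad B_i=(\langle i|\otimes I_b)\sqrt{\rho^{ab}}(|i\rangle\otimes I_b).
\end{equation}
Writing $P_i=|i\rangle\langle i|\otimes I_b$, using $P_i^2=P_i$ and cyclicity of the trace, each summand can be recast as $\mathrm{tr}(B_i^2)=\mathrm{tr}(P_i\sqrt{\rho^{ab}}P_i\sqrt{\rho^{ab}})$. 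Hence (C4) is equivalent to the concavity of the map $\rho^{ab}\mapsto\sum_i\mathrm{tr}(P_i\sqrt{\rho^{ab}}P_i\sqrt{\rho^{ab}})$.

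The concavity step I would handle by invoking Lieb's concavity theorem: for any fixed operator $K$ and any $p\in[0,1]$, the map $(A,B)\mapsto\mathrm{tr}(K^{\dagger}A^p K B^{1-p})$ is jointly concave on pairs of positive operators. Choosing $p=1/2$ and $K=P_i$ gives joint concavity of $(A,B)\mapsto\mathrm{tr}(P_i\sqrt{A}P_i\sqrt{B})$. Restricting to the diagonal $A=B=\rho^{ab}$ yields concavity in $\rho^{ab}$ of each summand, and summation over $i$ preserves concavity; subtracting from $1$ delivers the desired convexity of $C^a_A$.

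The main obstacle I expect is precisely the one emphasized in Remark~\ref{rem4}: because $d_A$ is not jointly convex, convexity of $C^a_A$ is not a formal consequence of the general framework and must be extracted from the particular structure of the problem. The rewriting above makes this structure manifest, after which Lieb's theorem does the work in one line. A minor secondary subtlety worth checking is that the reduction $\mathrm{tr}(B_i^2)=\mathrm{tr}(P_i\sqrt{\rho^{ab}}P_i\sqrt{\rho^{ab}})$ is valid without any positivity or invertibility assumption on the $B_i$; this is routine once one notes that $B_i=P_i\sqrt{\rho^{ab}}P_i$ viewed as an operator on the whole bipartite space agrees with the earlier definition on $\mathcal{H}_b$ up to identification, so the two traces coincide.
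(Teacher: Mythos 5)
Your proof is correct and follows essentially the same route as the paper: (C1)--(C3) come from the strong-monotone property already established, and (C4) is reduced, via the closed form $C^a_A(\rho^{ab})=1-\sum_i\mathrm{tr}\bigl(P_i\sqrt{\rho^{ab}}P_i\sqrt{\rho^{ab}}\bigr)$, to concavity of $\rho\mapsto\sum_i\mathrm{tr}\bigl(P_i\sqrt{\rho}P_i\sqrt{\rho}\bigr)$. The only difference is packaging: the paper obtains this concavity by identifying $C^a_A$ with the Wigner--Yanase skew-information partial coherence $C^a_H$ and citing the known convexity of skew information, whereas you invoke Lieb's concavity theorem at $p=1/2$ directly and restrict to the diagonal --- the same underlying fact.
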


\begin{proof}
Earlier, at the end of Sec. \ref{fidelity-coherence} A, we have seen that $C^a_A$ is a strong partial coherence monotone. Here, we just need to prove the convexity of $C^a_A$. Let us consider the quantification of partial coherence using Wigner-Yanase skew information \cite{Luo2017B}
\begin{align}
C^a_H(\rho^{ab}):=\sum_iI(\rho,\ket{i}\bra{i}\otimes I),
\end{align}
where $I(\rho,K):=-\frac{1}{2}\mathrm{tr}[\sqrt{\rho},K]^2$. Since
\begin{align*}
C^a_H(\rho^{ab})&=\sum_i\mathrm{tr}(\rho\ket{i}\bra{i}\otimes I-\sqrt{\rho}\ket{i}\bra{i}\otimes I\sqrt{\rho}\ket{i}\bra{i}\otimes I)\\
&=1-\sum_i\mathrm{tr}(\sqrt{\rho}\ket{i}\bra{i}\otimes I\sqrt{\rho}\ket{i}\bra{i}\otimes I)\\
&=C^a_A(\rho^{ab}),
\end{align*}
and $I(\rho,K)$ is convex in $\rho$ \cite{wigner1963,Luo2003}, one has
\begin{align*}
&C^a_A \left(\sum_ip_i\rho^{ab}_i \right)=C^a_H \left(\sum_ip_i\rho^{ab}_i \right)\\
\le&\sum_ip_iC^a_H(\rho^{ab}_i)=\sum_ip_iC^a_A(\rho^{ab}_i).
\end{align*}
\end{proof}

Now we consider the quantum state discrimination of the ensemble $\{\omega_i,\eta_i\}^{n_a}_{i=1}$. As $\sum_i\eta_i\omega_i
=\rho^{ab}$, the LSM are given by
\begin{align*}
M^{lsm}_i=\eta_i(\rho^{ab})^{-1/2}\omega_i(\rho^{ab})^{-1/2}=\ket{i}\bra{i}\otimes I^b, i=1,...,n,
\end{align*}
where $\rho^{-1/2}:=\sum_i\lambda^{-1/2}_iP_i$ for the spectral decomposition $\rho=\sum_i\lambda_iP_i$. Hence, the success probability to discriminate $\{\omega_i,\eta_i\}$ with LSM is
\begin{align*}
P^{lsm}_S(\{\omega_i,\eta_i\})&=\sum_i\eta_i\mathrm{tr}(M^{lsm}_i\omega_i)\\
&=\sum_i\bra{i}\otimes I\sqrt{\rho^{ab}}\ket{i}\bra{i}\otimes I\sqrt{\rho^{ab}}\ket{i}\otimes I\\
&=\sum_iB^2_i=A^2(\rho^{ab}).
\end{align*}
Thus, we have the following theorem.

\begin{thm}
For a bipartite quantum state $\rho^{ab} \in \mathcal{H}_a\otimes\mathcal{H}_b$
with $\{\ket{i}\}^n_{i=1}$ being a reference basis of $\mathcal{H}_a$, the affinity partial coherence of $\rho^{ab}$ is equal to the error probability to discriminate $\{\omega_i,\eta_i\}^n_{i=1}$ with least square measurement. That is,
\begin{align}
C^a_A(\rho^{ab})=P^{lsm}_E(\{\omega_i,\eta_i\}^n_{i=1}),
\end{align}
where $\eta_i=\mathrm{tr}\bra{i}\rho^{ab}\ket{i}$ and $\omega_i=\eta^{-1}_i\sqrt{\rho^{ab}}\ket{i}\bra{i}\otimes I^b\sqrt{\rho^{ab}}$.
\end{thm}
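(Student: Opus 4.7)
The plan is to combine the two expressions already established just before the theorem statement: the closed-form for affinity partial coherence, and the computation of the LSM success probability for the ensemble $\{\omega_i,\eta_i\}$. Both have essentially been derived in the preceding paragraphs, so the proof will amount to recognizing that they are complementary in the sense that one is $1$ minus the other.

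First I would recall that the optimization carried out over partial-incoherent states $\sigma = \sum_{ij}p_{ij}\ket{i}\bra{i}\otimes\ket{\psi_{j|i}}\bra{\psi_{j|i}}$, via the Cauchy--Schwarz inequality applied to $\langle i\otimes\psi_{j|i}|\sqrt{\rho^{ab}}|i\otimes\psi_{j|i}\rangle$, yields
\begin{align*}
C^a_A(\rho^{ab}) \;=\; 1 - \sum_i \mathrm{tr}\bigl(B_i^2\bigr),
\qquad B_i \;=\; \bra{i}\otimes I^b\,\sqrt{\rho^{ab}}\,\ket{i}\otimes I^b,
\end{align*}
with the CPIS described in terms of the spectral eigenvectors of each $B_i$. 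This is the expression for the left-hand side of the claim.

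Next I would compute the right-hand side directly from the definition of LSM. Observing that the output state of the ensemble is $\rho_{out}=\sum_i\eta_i\omega_i=\sqrt{\rho^{ab}}(\sum_i\ket{i}\bra{i}\otimes I^b)\sqrt{\rho^{ab}}=\rho^{ab}$, the LSM operators simplify dramatically to $M^{lsm}_i=\eta_i(\rho^{ab})^{-1/2}\omega_i(\rho^{ab})^{-1/2}=\ket{i}\bra{i}\otimes I^b$. Substituting into $P^{lsm}_S=\sum_i\eta_i\mathrm{tr}(M^{lsm}_i\omega_i)$ and unpacking $\omega_i$ gives
\begin{align*}
P^{lsm}_S(\{\omega_i,\eta_i\}) \;=\; \sum_i \mathrm{tr}\bigl(B_i^2\bigr) \;=\; A^2(\rho^{ab}),
\end{align*}
and therefore $P^{lsm}_E=1-A^2(\rho^{ab})=C^a_A(\rho^{ab})$, which is the claim.

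The main subtlety, and the only step that really requires attention, is the invertibility of $\rho^{ab}$ needed to write $\rho_{out}^{-1/2}$ as an operator on the whole Hilbert space. On the support of $\rho^{ab}$ the identity $M^{lsm}_i=\ket{i}\bra{i}\otimes I^b$ holds; on the kernel one takes the Moore--Penrose pseudoinverse and then completes the POVM with an operator orthogonal to the support of $\rho^{ab}$, which contributes zero to $P^{lsm}_S$ because $\mathrm{tr}(M_i\omega_i)$ is supported on $\mathrm{supp}(\rho^{ab})$. Once this technicality is dispatched, the two displayed equations above combine into $C^a_A(\rho^{ab})=P^{lsm}_E(\{\omega_i,\eta_i\}^n_{i=1})$, finishing the proof.
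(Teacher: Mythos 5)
Your proposal is correct and follows essentially the same route as the paper: evaluate $C^a_A(\rho^{ab})=1-\sum_i\mathrm{tr}(B_i^2)$ via the Cauchy--Schwarz optimization over partial incoherent states, note that $\rho_{out}=\sum_i\eta_i\omega_i=\rho^{ab}$ so the LSM elements reduce to $\ket{i}\bra{i}\otimes I^b$ (on the support), and conclude $P^{lsm}_S=\sum_i\mathrm{tr}(B_i^2)=A^2(\rho^{ab})$, hence $P^{lsm}_E=C^a_A(\rho^{ab})$. Your explicit handling of the pseudoinverse and POVM completion when $\rho^{ab}$ is rank-deficient is in fact slightly more careful than the paper, which only implicitly adopts the pseudoinverse convention $\rho^{-1/2}=\sum_i\lambda_i^{-1/2}P_i$.
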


On the other hand, for a state discrimination task $\{\rho_i,\eta_i\}^n_{i=1}$ with QSD-state $\rho$, one can find an unitary matrix $U$ such that
\begin{align*}
\rho_i=U\omega_i U^{\dagger}, ~i=1,...,n,
\end{align*}
where $\eta_i,\omega_i$ are the same as above. Therefore, the LSM for $\{\rho_i,\eta_i\}^n_{i=1}$ are
\begin{align*}
N_i=U^{\dagger}\ket{i}\bra{i}\otimes IU,
\end{align*}
and the error probability to discriminate this task with LSM is
\begin{align*}
P^{lsm}_E(\{\rho_i,\eta_i\})&=1-\sum_i\eta_i\mathrm{tr}(U^{\dagger}\ket{i}\bra{i}\otimes IU\rho_i)\\
&=1-\sum_i\eta_i\mathrm{tr}(\ket{i}\bra{i}\otimes I\omega_i)\\
&=P^{lsm}_E(\{\omega_i,\eta_i\})=C^a_A(\rho).
\end{align*}

Hence, we arrive at the following result.
\begin{thm}
Let $\mathcal{H}_a$ be an $n$-dimensional Hilbert space and $\{\ket{i}\}^n_{i=1}$ be the computational basis, that is, the $i$-th entry of vector $\ket{i}$ is $1$ and rest are zero. For the ensemble $\{\rho_i,\eta_i\}^n_{i=1}$ in the Hilbert space $\mathcal{H}_b$, the error probability to discriminate $\{\rho_i\}^n_{i=1}$ with LSM is equal to the affinity partial coherence of the corresponding QSD-state $\rho$, that is,
\begin{align}
P^{lsm}_E(\{\rho_i,\eta_i\}^n_{i=1})= C^a_A(\rho),
\end{align}
where the fixed L{\"u}ders measurement are $\{\ket{i}\bra{i}\otimes I^b\}^n_{i=1}$.
\end{thm}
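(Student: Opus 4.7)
The plan is to combine two previously established facts. First, the preceding theorem tells us that for any bipartite state $\rho^{ab}$, the affinity partial coherence equals the LSM error probability of the induced ensemble, $C^a_A(\rho^{ab}) = P^{lsm}_E(\{\omega_i,\eta_i\}^n_{i=1})$ with $\omega_i = \eta_i^{-1}\sqrt{\rho^{ab}}\ket{i}\bra{i}\otimes I^b\sqrt{\rho^{ab}}$. Second, the polar-decomposition argument developed around Theorem~\ref{thm5} shows that for the QSD-state $\rho$ built from $\{\rho_i,\eta_i\}^n_{i=1}$ via Eq.~\eqref{eq21}, the induced ensemble of $\rho$ is unitarily equivalent to the original one: writing $A=(\sqrt{\eta_1\rho_1},\ldots,\sqrt{\eta_n\rho_n})$ and using $A=U\sqrt{\rho}$ yields $\omega_i = U^{\dagger}\rho_i U$ and $p_i = \eta_i$. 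So the whole task reduces to transporting the LSM across this unitary.

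Putting these together, I would compute directly: applied to the induced ensemble $\{\omega_i,\eta_i\}$ of the QSD-state, the output state is $\sum_i \eta_i\omega_i = U^{\dagger}\rho U$, and substituting into \eqref{eq32} collapses the LSM to the clean form $M_i^{lsm} = \ket{i}\bra{i}\otimes I^b$, i.e., exactly the L\"uders projectors that define partial coherence here. Pulling back by the unitary gives $N_i = U\ket{i}\bra{i}\otimes I^b\,U^{\dagger}$ as the candidate LSM for $\{\rho_i,\eta_i\}$, and a short check using $\sum_i \eta_i \rho_i = U(\sum_i \eta_i \omega_i)U^{\dagger}$ and the definition in \eqref{eq32} confirms that these are indeed the LSM of $\{\rho_i,\eta_i\}$. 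Because unitary conjugation preserves both the states and the measurement operators, the success probability is invariant term by term, yielding $P^{lsm}_E(\{\rho_i,\eta_i\}) = P^{lsm}_E(\{\omega_i,\eta_i\}) = C^a_A(\rho)$.

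The only technical nuisance --- and the step I would write out most carefully --- concerns the rank-deficient case. When $\rho$ is not invertible, $U$ produced by polar decomposition is only a partial isometry and the inverse square roots in the LSM formula \eqref{eq32} must be interpreted as pseudoinverses on the relevant supports. Provided this convention is adopted consistently on both sides of the unitary transport, the identity $M_i^{lsm}=\ket{i}\bra{i}\otimes I^b$ still holds on the support of the output state, the pullback $N_i$ remains well defined, and the chain of equalities survives. Aside from this bookkeeping, the theorem follows essentially without computation from the two ingredients above, so the main value of writing the proof out is to make the unitary dictionary between $\{\rho_i\}$ and $\{\omega_i\}$ explicit and to verify that the L\"uders projectors are genuinely the LSM of the induced ensemble.
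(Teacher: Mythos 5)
Your proposal is correct and is essentially the paper's own argument: it combines the preceding theorem ($C^a_A(\rho)=P^{lsm}_E(\{\omega_i,\eta_i\})$ for the induced ensemble of the QSD-state), the identification $\omega_i=U^{\dagger}\rho_i U$ and $p_i=\eta_i$ from the polar-decomposition construction, and the transport of the L\"uders-projector LSM through the (partial) isometry to equate the two LSM error probabilities. The only cosmetic slip is that $\sum_i\eta_i\omega_i$ equals the QSD-state $\rho$ itself (equivalently $U^{\dagger}\bigl(\sum_i\eta_i\rho_i\bigr)U$ restricted to its support), not a conjugate of it, which does not affect the chain of equalities; your pseudoinverse caveat for rank-deficient $\rho$ is a reasonable refinement that the paper leaves implicit.
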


\section{correlated coherence and quantum correlation}
\label{sec:correlated coherence}

Let $X$ be either fidelity or affinity below. We can define quantum entanglement, quantum discord, coherence and partial coherence from a unified perspective as follows:
\begin{align*}
&E_X(\rho^{ab}):=\min_{\sigma\in\mathcal{S}}d_X(\rho^{ab},\sigma),\\
&D_X(\rho^{ab}):=\min_{\sigma\in \mathcal{C}_a}d_X(\rho^{ab},\sigma),\\
&C_X(\rho):=\min_{\sigma\in \mathcal{I}}d_X(\rho,\sigma),\\
&C^a_X(\rho^{ab}):=\min_{\sigma\in I^a_P}d_X(\rho^{ab},\sigma),
\end{align*}
where  $\mathcal{S}$, $\mathcal{I}$ and $\mathcal{C}_a$ respectively are the sets of separable states, incoherent states and classical states on party $a$ \cite{spehner2013A}. Using Theorem \ref{thm2} and Remark \ref{rem4}, $E_A$ is a strong entanglement monotone and $E_F$ (geometric entanglement) is a entanglement measure \cite{Wei2003}.

We know that coherence is a more fundamental quantum correlation than entanglement and discord \cite{Yao2015}. Moreover, partial coherence in a bipartite system may contain both local coherence and correlated coherence. To characterize the correlation between parties a and b, we define generalized correlated coherence, following Refs. \cite{Tan2016,Tan2018}, as
\begin{align}
C^a_{X,gcc}(\rho^{ab}):=C^a_X(\rho^{ab})-C_X(\rho^a).
\end{align}

\begin{thm}
The generalized correlated coherence $C^a_{X,gcc}$ is nonnegative.
\end{thm}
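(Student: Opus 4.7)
The plan is to exhibit a specific incoherent state $\tau$ on subsystem $a$, derived from the optimizer in the definition of $C^a_X(\rho^{ab})$, whose distance to $\rho^a$ is at most $C^a_X(\rho^{ab})$. Since $C_X(\rho^a)$ is a minimum over all such $\tau$, the inequality $C_X(\rho^a) \le C^a_X(\rho^{ab})$ will follow, which is exactly the claim.

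First I would use the closedness of $I^a_P$ to pick an optimal partial incoherent state $\sigma^\star \in I^a_P$ with $C^a_X(\rho^{ab}) = d_X(\rho^{ab}, \sigma^\star)$. By Eq.~\eqref{eq22} this $\sigma^\star$ has the block-diagonal form $\sigma^\star = \sum_i p_i \ket{i}\bra{i} \otimes \sigma_i^\star$, so that $\mathrm{tr}_b(\sigma^\star) = \sum_i p_i \ket{i}\bra{i}$ is manifestly an incoherent state in $\mathcal{I}$ on system $a$ with respect to the fixed reference basis $\{\ket{i}\}$.

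Next I would invoke the contractivity of $d_X$ under CPTP maps (property (D2), which holds for both $X = F$ and $X = A$ as noted after Eq.~\eqref{eq7}). Applied to the partial trace $\mathrm{tr}_b$, which is CPTP, this gives
\begin{equation*}
d_X(\rho^{ab}, \sigma^\star) \;\ge\; d_X\bigl(\mathrm{tr}_b(\rho^{ab}), \mathrm{tr}_b(\sigma^\star)\bigr) \;=\; d_X\bigl(\rho^a, \mathrm{tr}_b(\sigma^\star)\bigr).
\end{equation*}
Since $\mathrm{tr}_b(\sigma^\star) \in \mathcal{I}$, the right-hand side is bounded below by $\min_{\tau \in \mathcal{I}} d_X(\rho^a, \tau) = C_X(\rho^a)$. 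Chaining the inequalities yields $C^a_X(\rho^{ab}) \ge C_X(\rho^a)$, i.e. $C^a_{X, gcc}(\rho^{ab}) \ge 0$.

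There is no real obstacle here: the argument is just monotonicity under partial trace plus the structural observation that partial incoherent states marginalise on $a$ to incoherent states. The only point requiring a little care is to check that both fidelity and affinity distances are contractive under arbitrary CPTP maps (this is property (P3), recorded in the paragraph preceding Lemma~\ref{lem2}), so the argument covers the cases $X = F$ and $X = A$ uniformly.
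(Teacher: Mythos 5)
Your proposal is correct and follows essentially the same route as the paper's own proof: both arguments apply the contractivity of $d_X$ under the partial trace $\mathrm{tr}_b$ and use the fact that the marginal on $a$ of a partial incoherent state $\sum_i p_i \ket{i}\bra{i}\otimes\sigma_i$ is diagonal in the reference basis, hence incoherent, giving $C^a_X(\rho^{ab}) \ge C_X(\rho^a)$. The only cosmetic difference is that you fix an optimal $\sigma^\star$ while the paper keeps the minimization explicit throughout; the content is identical.
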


\begin{proof}
Since
\begin{align*}
C^a_X(\rho^{ab})
&=\min_{\{p_i,\sigma_i\}}d_X \left(\rho^{ab},\sum_ip_i\ket{i}\bra{i}\otimes\sigma_i \right)\\
&\ge\min_{\{p_i,\sigma_i\}}d_X \left(\mathrm{tr}_b\rho^{ab},\mathrm{tr}_b \left(\sum_ip_i\ket{i}\bra{i}\otimes\sigma_i \right)\right)\\
&=\min_{\{p_i\}}d_X \left(\rho^a,\sum_ip_i\ket{i}\bra{i} \right)\\
&=C_X(\rho^{ab}),
\end{align*}
where the inequality is due to the contractibility of ``$d_X$", the generalized correlated coherence $C^a_{X,gcc}(\rho^{ab})$ is nonnegative.
\end{proof}

Our definition of correlated coherence is basis-dependent. However, we can also define basis-independent correlated coherence in a natural way. For bipartite state $\rho^{ab}$, the reduced density matrix $\rho^{a}$ has eigenvectors $\{\ket{\alpha_i}\}$. Choosing the local basis $\rho^{a}$ has zero (local) coherence, and the correlated coherence reduces to the partial coherence with respect to the eigenbasis of the corresponding reduced density matrix. In this way, the partial coherence in the system is stored entirely within the correlations.


Therefore, we define the correlated coherence as
\begin{align}
C^a_{X,cc}:=\min_{\mathcal{B}_a}C^a_X(\rho^{ab})-C_X(\rho^a),
\end{align}
where the minimization is performed over all the local bases $\mathcal{B}_a$ satisfying $C_X(\rho^a)=0$.

\begin{thm}\label{thm11}
For a bipartite quantum state $\rho^{ab}$, 
\begin{align}
C^a_{X,cc}(\rho^{ab})\ge D^a_X(\rho^{ab}).
\end{align}
The equality holds if either $\rho^a$ is a completely mixed state or $\rho^{ab}$ is a pure state.
\end{thm}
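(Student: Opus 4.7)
The plan is to reduce the inequality to a set-inclusion argument. Since the defining constraint in $C^a_{X,cc}$ forces $C_X(\rho^a)=0$, we may rewrite
\begin{equation*}
C^a_{X,cc}(\rho^{ab})=\min_{\mathcal{B}_a:\,\rho^a\text{ diagonal in }\mathcal{B}_a}\,\min_{\sigma\in I^a_P(\mathcal{B}_a)}d_X(\rho^{ab},\sigma).
\end{equation*}
Every $\sigma\in I^a_P(\mathcal{B}_a)$ is classical on party $a$, so each inner minimum is an upper bound on $D^a_X(\rho^{ab})=\min_{\sigma\in\mathcal{C}_a}d_X(\rho^{ab},\sigma)$; taking the outer minimum preserves that bound and yields $C^a_{X,cc}(\rho^{ab})\ge D^a_X(\rho^{ab})$.

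For the two equality cases it suffices to exhibit a single admissible basis $\mathcal{B}_a$ (one that diagonalises $\rho^a$) together with a $\sigma^\star\in I^a_P(\mathcal{B}_a)$ that attains the $D^a_X$ minimum. If $\rho^a=I/n_a$, every basis diagonalises $\rho^a$, the outer constraint becomes vacuous, the double minimum coincides with the unrestricted minimum over $\mathcal{C}_a$, and equality is immediate.

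For $\rho^{ab}=|\psi\rangle\langle\psi|$ pure, I would invoke the Schmidt decomposition $|\psi\rangle=\sum_i\sqrt{\lambda_i}\,|\alpha_i\rangle\otimes|\beta_i\rangle$, so that $\rho^a=\sum_i\lambda_i|\alpha_i\rangle\langle\alpha_i|$. For any classical state $\sigma=\sum_k q_k|k\rangle\langle k|\otimes\sigma_k$, the identities $F^2(|\psi\rangle\langle\psi|,\sigma)=\langle\psi|\sigma|\psi\rangle$ and $A^2(|\psi\rangle\langle\psi|,\sigma)=\langle\psi|\sqrt{\sigma}|\psi\rangle^{2}$ reduce the problem to analysing $\sum_k q_k\,\langle v_k|T(\sigma_k)|v_k\rangle$, where $|v_k\rangle:=\sum_i\sqrt{\lambda_i}\,\overline{\langle k|\alpha_i\rangle}\,|\beta_i\rangle$ and $T$ is the identity (for $X=F$) or the square root (for $X=A$). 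Optimising first over $\sigma_k$ (rank-one along $|v_k\rangle$), then over the weights $q_k$ (using the easily verified $\sum_k\|v_k\|^2=1$), and finally over the ONB $\{|k\rangle\}$, the extremum is attained by taking one of the $|k\rangle$ to be a dominant Schmidt vector $|\alpha_{i^\star}\rangle$; the resulting $\sigma^\star=|\alpha_{i^\star}\rangle\langle\alpha_{i^\star}|\otimes|\beta_{i^\star}\rangle\langle\beta_{i^\star}|$ has $a$-marginal diagonal in an eigenbasis of $\rho^a$. Completing $|\alpha_{i^\star}\rangle$ to a full eigenbasis of $\rho^a$ produces an admissible $\mathcal{B}_a$ with $\sigma^\star\in I^a_P(\mathcal{B}_a)$, whence $D^a_X\ge C^a_{X,cc}$ and equality follows.

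The principal obstacle is the pure-state optimisation, particularly for affinity: the square root inside the quadratic form makes the inner maximisation over $\sigma_k$ less transparent than in the fidelity case, and one must verify that the chain of three optimisations genuinely lands on an $a$-marginal that is an eigen-projector of $\rho^a$. Degeneracies of $\rho^a$ are only a cosmetic nuisance, since the argument merely requires \emph{some} eigenbasis of $\rho^a$ containing the single optimal Schmidt direction.
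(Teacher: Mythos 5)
Your inequality argument and the completely-mixed case are fine and essentially match the paper: restricting the minimization to bases diagonalising $\rho^a$ can only increase the minimum over $a$-classical states, and when $\rho^a=I/n_a$ the restriction is vacuous. The fidelity half of the pure-state case is also correct ($\max_\sigma\bra{\psi}\sigma\ket{\psi}\le\max_k\bra{k}\rho^a\ket{k}\le\lambda_{\max}$, attained at $\ket{\alpha_1}\bra{\alpha_1}\otimes\ket{\beta_1}\bra{\beta_1}$).

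The genuine gap is in the affinity part of the pure-state case, exactly where you flagged discomfort: your claimed extremizer is wrong. With $\sigma=\sum_k q_k\ket{k}\bra{k}\otimes\sigma_k$ one has $A(\ket{\psi}\bra{\psi},\sigma)=\sum_k\sqrt{q_k}\,\bra{v_k}\sqrt{\sigma_k}\ket{v_k}$; the inner step (rank-one $\sigma_k$ along $\ket{v_k}$) is right, but the weight optimization by Cauchy--Schwarz gives $q_k\propto\|v_k\|^4$, i.e.\ the optimum spreads over \emph{all} $k$ with $\|v_k\|\neq0$ and yields $A=\bigl(\sum_k\bra{k}\rho^a\ket{k}^2\bigr)^{1/2}$, not a single dominant term. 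Concentrating on one dominant Schmidt direction, as you propose, gives only $A=\lambda_1<\bigl(\sum_i\lambda_i^2\bigr)^{1/2}$ whenever $\ket{\psi}$ is entangled, so $\sigma^\star=\ket{\alpha_{i^\star}}\bra{\alpha_{i^\star}}\otimes\ket{\beta_{i^\star}}\bra{\beta_{i^\star}}$ does \emph{not} attain the $D^a_A$ minimum, and your chain ``exhibit the $D^a_A$-optimizer inside $I^a_P(\mathcal{B}_a)$'' collapses at that point. The correct closest $a$-classical state is the mixed state $\sum_i\frac{\lambda_i^2}{\sum_j\lambda_j^2}\ket{x_i\otimes y_i}\bra{x_i\otimes y_i}$ (this is what the paper computes, obtaining $C^a_{A,cc}(\ket{\psi})=1-\sum_i\lambda_i^2$ versus $C^a_{F,cc}(\ket{\psi})=1-\lambda_{\max}$). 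Your strategy is repairable, because this optimizer is still diagonal in the Schmidt basis on party $a$; but to finish the basis optimization you must additionally argue that $\sum_k\bra{k}\rho^a\ket{k}^2$ is maximized by an eigenbasis of $\rho^a$ (e.g.\ via majorization of the diagonal by the spectrum and Schur convexity of $x\mapsto\sum_k x_k^2$), a step absent from your proposal.
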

\begin{proof}
As $D^a_X(\rho^{ab})$ is the minimal partial coherence of $\rho^{ab}$ \cite{Luo2017B}, we have $C^a_{X,cc}(\rho^{ab})\ge D^a_X(\rho^{ab})$.

Now, if $\rho^a=\frac{1}{n_a}I^a$, then the eigenbasis of $\rho^a$ can be any set of orthogonal basis in $H_a$. As a result,
\begin{align*}
C^a_X(\rho^{ab})&=\min_{\mathcal{B}_a}C^a_{X,cc}(\rho^{ab})\\
&=\min_{\{p_i,\ket{\alpha_i},\sigma_i\}}d_X \left(\rho^{ab},\sum_ip_i\ket{\alpha_i}\bra{\alpha_i}\otimes\sigma_i \right)\\
&=D^a_X(\rho^{ab}).
\end{align*}

And, for a pure bipartite state $\ket{\psi}$ with the Schmidt decomposition
\begin{align}\label{eq38}
\ket{\psi}=\sum_i\sqrt{\lambda_i}\ket{x_i}\otimes\ket{y_i},
\end{align}
$\rho_{\psi}^a:=\mathrm{tr}_b\ket{\psi}\bra{\psi}=\sum_i\lambda_i\ket{x_i}\bra{x_i}$. Note that when some Schmidt coefficients are equal, the Schmidt decomposition in Eq. \eqref{eq38} is not unique.
Without loss of generalization, we assume that $\lambda_1\ge\lambda_2\ge...\ge\lambda_n$ and the Schmidt decomposition in Eq. \eqref{eq38} satisfies $C^a_{A,cc}(\rho^{ab})=C^a_A(\rho^{ab})$ (with respect to $\{\ket{x_i}\}$). Then,
\begin{align*}
\max_{\sigma\in I^a_P}A(\ket{\psi},\sigma)&=\max_{\{p_i,\sigma_i\}}\sum_i\sqrt{p_i}\bra{\psi} (\ket{x_i}\bra{x_i}\otimes\sigma_i) \ket{\psi}\\
&=\max_{\{p_i,\sigma_i\}}\sum_i\sqrt{p_i}\lambda_i\bra{y_i}\sigma_i\ket{y_i}\\
&=\max_{\{p_i\}}\sum_i\sqrt{p_i}\lambda_i\le\sqrt{\sum_i\lambda^2_i}.
\end{align*}
The third equality holds when we choose each $\sigma_i=\ket{y_i}\bra{y_i}$. The $``\le"$ is due to the Cauchy-Schwartz inequality and the maximum is reached for $p_i=\frac{\lambda^2_i}{\sum_j\lambda^2_j}$. Hence,
\begin{align*}\label{eq39}
C^a_{A,cc}(\ket{\psi})=1-\sum_i\lambda^2_i,
\end{align*}
with $\lambda_i$-s being the Schmidt coefficients of $\ket{\psi}$ and the closest a-classical state is
\begin{align}
\sigma=\sum_i\frac{\lambda^2_i}{\sum_j\lambda^2_j}\ket{x_i\otimes y_i}\bra{x_i\otimes y_i}.
\end{align}


On the other hand, if we assume that the Schmidt decomposition in Eq. \eqref{eq38} satisfies $C^a_{F,cc}(\rho^{ab})=C^a_F(\rho^{ab})$ (with respect to $\{\ket{x_i}\}$), then
\begin{align*}
\max_{\sigma\in I^a_P}F(\ket{\psi},\sigma)&=\max_{\{p_i,\sigma_i\}}\sqrt{\sum_ip_i\lambda_i\bra{y_i}\sigma_i\ket{y_i}}\\
&=\max_{\{p_i\}}\sqrt{\sum_ip_i\lambda_i}\le\sqrt{\lambda_1},
\end{align*}
and 
\begin{align*}
C^a_{F,cc}(\ket{\psi})=1-\lambda_{max}.
\end{align*}
If the maximal Schmidt coefficient is unity, the closest a-classical state is a pure state $\ket{x_1\otimes y_1}$. If $\lambda_1=...=\lambda_m$, then the closest a-classical states are infinite, say
\begin{align}
\sigma_{\psi}=\sum^m_{i=1}p_i\ket{x_i\otimes y_i}\bra{x_i\otimes y_i},
\end{align}
with $p_i\in[0,1]$ and $\sum_ip_i=1$.
\end{proof}

\begin{thm}
$C^a_{X,cc}$ is a measure of quantum correlation for a bipartite quantum state $\rho^{ab}$. That is,

(1) $C^a_{X,cc}(\rho^{ab})\ge0$; ``=" holds if and only if $\rho^{ab}=\sum_ip_i\ket{\alpha_i}\bra{\alpha_i}\otimes\sigma_i$.

(2) $C^a_{X,cc}(\rho^{ab})$ is invariant under local unitary transformation.

(3) $C^a_{X,cc}$ is monotonically non-increasing under quantum operations on b, i.e., $C^a_{cc}(I^a\otimes \Phi^b(\rho^{ab}))\le C^a_{cc}(\rho^{ab})$ for any quantum operation $\Phi^b$.

(4) $C^a_{X,cc}$ reduces to an entanglement monotone for pure states.
\end{thm}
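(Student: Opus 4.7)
The plan is to verify the four listed properties in turn, leaning heavily on Theorem \ref{thm11} for parts (1) and (4), and on the unitary invariance plus contractibility of $d_X$ (established in Section \ref{sec:quantifying resource}) for parts (2) and (3). Throughout, I would exploit the observation that the minimization defining $C^a_{X,cc}$ is effectively over bases $\mathcal{B}_a$ that diagonalize $\rho^a$, since only such bases satisfy the constraint $C_X(\rho^a)=0$ that is imposed in the definition.

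For (1), the inequality $C^a_{X,cc}(\rho^{ab})\ge D^a_X(\rho^{ab})\ge 0$ is delivered directly by Theorem \ref{thm11} combined with the standard non-negativity of the distance-based discord. For the equality characterization, the forward direction uses Theorem \ref{thm11} again: if $C^a_{X,cc}(\rho^{ab})=0$, then $D^a_X(\rho^{ab})=0$, and the faithfulness of $D^a_X$ on $\mathcal{C}_a$ forces $\rho^{ab}$ to be classical-quantum. Conversely, given $\rho^{ab}=\sum_i p_i\ket{\alpha_i}\bra{\alpha_i}\otimes\sigma_i$, the reduced state $\rho^a=\sum_i p_i\ket{\alpha_i}\bra{\alpha_i}$ is diagonal in $\{\ket{\alpha_i}\}$, so this basis is admissible in the minimization and witnesses $C^a_X(\rho^{ab})=0$. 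For (2), I would argue that a local unitary $U^a\otimes U^b$ takes every admissible basis $\{\ket{\alpha_i}\}$ for $\rho^{ab}$ to an admissible basis $\{U^a\ket{\alpha_i}\}$ for the transformed state, partial incoherent states are carried bijectively onto partial incoherent states under conjugation, and the unitary invariance of $d_X$ then forces the two minima to coincide.

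For (3), I would first note that $\mathrm{tr}_b(I^a\otimes\Phi^b(\rho^{ab}))=\rho^a$ since $\Phi^b$ is trace-preserving, so the family of admissible bases is exactly the same before and after the map. For any such fixed basis, $I^a\otimes\Phi^b$ sends $\sum_i p_i\ket{\alpha_i}\bra{\alpha_i}\otimes\sigma_i$ to $\sum_i p_i\ket{\alpha_i}\bra{\alpha_i}\otimes\Phi^b(\sigma_i)$, carrying $I^a_P$ into itself; contractibility of $d_X$ under the CPTP map $I^a\otimes\Phi^b$ then gives $C^a_X(I^a\otimes\Phi^b(\rho^{ab}))\le C^a_X(\rho^{ab})$ for each admissible basis, and the inequality survives the outer minimization. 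For (4), on pure states Theorem \ref{thm11} already supplies the closed-form expressions $C^a_{A,cc}(\ket{\psi})=1-\sum_i\lambda_i^2$ and $C^a_{F,cc}(\ket{\psi})=1-\lambda_{\max}$ in terms of the Schmidt coefficients; both are symmetric and concave functions of the Schmidt vector, so by Vidal's characterization any such function defines a pure-state entanglement monotone, which is what (4) requires. The main obstacle I anticipate is the careful bookkeeping in (2) and (3) when $\rho^a$ has a degenerate spectrum: the eigenbasis is then not unique and the admissible set forms a continuous family parametrized by unitaries on each eigenspace, and one must verify that the covariance under $U^a\otimes U^b$ (respectively, the invariance under $I^a\otimes\Phi^b$) genuinely maps this whole family onto the admissible family of the output state, so that the two minima being compared range over matched sets.
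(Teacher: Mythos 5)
Your proposal is correct, and for parts (2) and (3) it is essentially the paper's argument: unitary covariance of the admissible eigenbases together with unitary invariance of $d_X$, and, for maps on $b$, the facts that $\mathrm{tr}_b\bigl(I^a\otimes\Phi^b(\rho^{ab})\bigr)=\rho^a$, that $I^a\otimes\Phi^b$ sends $a$-classical states to $a$-classical states, and contractibility of $d_X$ (the paper phrases (3) via the optimal closest state $\sigma^{\star}$ rather than per-basis, but that is the same computation); your explicit attention to degenerate spectra is a welcome tightening of a point the paper glosses over. The differences are in (1) and (4). For (1) you route the forward implication through Theorem \ref{thm11} and the faithfulness of $D^a_X$, whereas the paper argues directly from the faithfulness of the partial coherence $C^a_X$ with respect to an eigenbasis of $\rho^a$; both are valid (and your observation that an $a$-classical basis is automatically an eigenbasis of $\rho^a$ closes the loop), though yours leans on closedness of $\mathcal{C}_a$ so that the discord minimum is attained. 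For (4) the routes genuinely diverge: the paper invokes Nielsen's majorization theorem and an explicit doubly stochastic computation to show $\sum_i\lambda_i^2\le\sum_j\mu_j^2$, hence monotonicity of $1-\sum_i\lambda_i^2$ under deterministic pure-state LOCC, and identifies $C^a_{F,cc}=E_F$ to dispose of the fidelity case; you instead appeal to Vidal's characterization (symmetric concave functions of the Schmidt spectrum are entanglement monotones), applied uniformly to $1-\sum_i\lambda_i^2$ and $1-\lambda_{\max}$. Your route is shorter and actually delivers a stronger conclusion (non-increase on average under LOCC, not just under deterministic pure-to-pure transformations), at the cost of importing a result the paper does not cite; the paper's route is self-contained given its reference to Nielsen. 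Either way the claim stands, so there is no gap to repair.
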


\begin{proof}
(1) Nonnegativity is obvious, so we just need to prove the second part. $C^a_{X,cc}(\rho^{ab})=0$ indicates that $\rho^{ab}=\sum_ip_i\ket{\alpha_i}\bra{\alpha_i}\otimes\sigma_i$ with $\{\ket{\alpha_i}\}$ being the eigenbasis of $\rho^a$.
On the other hand, if $\rho^{ab}=\sum_ip_i\ket{\alpha_i}\bra{\alpha_i}\otimes\sigma_i$, then $\rho^{a}=\sum_ip_i\ket{\alpha_i}\bra{\alpha_i}$. As a result, the eigenbasis of $\rho^a$ is $\{\ket{\alpha_i}\}$ and $C^a_{X,cc}(\rho^{ab})=0$.

(2) Suppose the referred eigenbasis of $\rho^a$ is $\{\ket{x_i}\}$. As $\mathrm{tr}_b(U_a\otimes U_b\rho^{ab}U^{\dagger}_a\otimes U^{\dagger}_b)=U_a\rho^{a}U^{\dagger}_a$, the corresponding eigenbasis is $\{U_a\ket{x_i}\}$. Hence,
\begin{align*}
&C^a_{X,cc}(U_a\otimes U_b\rho^{ab}U^{\dagger}_a\otimes U^{\dagger}_b)\\
=&\min_{\{p_i,\sigma_i\}}d_X \left(U_a\otimes U_b\rho^{ab}U^{\dagger}_a\otimes U^{\dagger}_b,\sum_ip_iU_a\ket{x_i}\bra{x_i}U^{\dagger}_a\otimes\sigma_i \right)\\
=&\min_{\{p_i,\sigma_i\}}d_X \left(\rho^{ab},\sum_ip_i\ket{x_i}\bra{x_i}\otimes\sigma_i \right)=C^a_{X,cc}(\rho^{ab}).
\end{align*}

(3) Assuming that $\sigma^{\star}\in \mathcal{C}_a$ is the CPIS of $\rho^{ab}$, we have
\begin{align*}
C^a_{X,cc}(\rho^{ab})&=d_X(\rho^{ab},\sigma^{\star})\\
&\ge d_X \left(I^a\otimes \Phi^b(\rho^{ab}),I^a\otimes \Phi^b(\sigma^{\star}) \right)\\
&\ge C^a_{X,cc} \left(I^a\otimes \Phi^b(\rho^{ab}) \right),
\end{align*}
where the first $``\ge"$ is due to the contractibility of $d_X$ and the second $``\ge"$ follows from $I^a\otimes \Phi^b(\sigma^{\star})\in \mathcal{C}_a$. In fact, if we suppose that $\sigma^{\star}=\sum_ip_i\ket{\alpha_i}\bra{\alpha_i}\otimes\sigma_i$, then $I^a\otimes \Phi^b(\sigma^{\star})=\sum_ip_i\ket{\alpha_i}\bra{\alpha_i}\otimes\Phi^b(\sigma_i)$ is also an a-classical state.

(4) 
Let $\lambda_i$s and $\mu_i$s be the Schmidt coefficients of bipartite pure states $\ket{\psi}$ and $\ket{\phi}$ respectively, where $\lambda_1\ge\lambda_2\ge...\ge\lambda_n$ and $\mu_1\ge\mu_2\ge...\ge\mu_n$. Denote $\vec{\lambda}=(\lambda_1,...,\lambda_n)^T$ and $\vec{\mu}=(\mu_1,...,\mu_n)^T$. If there is an LOCC channel which maps $\ket{\psi}$ to $\ket{\phi}$ then $\vec{\lambda}\prec\vec{\mu}$ \cite{Nielsen1999}, and there exists a doubly stochastic matrix $A=(a_{ij})_{n\times n}$ such that $\vec{\lambda}=A\vec{\mu}$. Hence,
\begin{align*}
\sum_i\lambda^2_i&=\sum_i \left(\sum_ja_{ij}\mu_j \right)^2\\
&=\sum_i \left(\sum_ja^2_{ij}\mu^2_j+2\sum_{j<k}a_{ij}a_{ik}\mu_j\mu_k \right)\\
&\le\sum_i \left[\sum_ja^2_{ij}\mu^2_j+\sum_{j<k}a_{ij}a_{ik} \left(\mu^2_j+\mu^2_k \right) \right]\\
&=\sum_i \left(\sum_ja^2_{ij}\mu^2_j+\sum_{j\ne k}a_{ij}a_{ik}\mu^2_j \right)\\
&=\sum_j\mu^2_j \left(\sum_ia^2_{ij}+\sum_i\sum_{k\ne j}a_{ij}a_{ik} \right)\\
&=\sum_j\mu^2_j \left[\sum_ia_{ij} \left(a_{ij}+\sum_{k\ne j}a_{ik} \right) \right]=\sum_j\mu^2_j.
\end{align*}
Moreover, based on Eq. \eqref{eq39}, one has
$C^a_{A,cc}(\ket{\psi})\ge C^a_{A,cc}(\ket{\phi}),$
which means that $C^a_{A,cc}$ is also an entanglement monotone.
Also, since $C^a_{F,cc}(\ket{\psi})=E_F(\ket{\psi})$, we conclude that $C^a_{X,cc}$ reduces to an entanglement monotone for pure states.

%
\end{proof}



\section{conclusion}
\label{sec:conclusion}

In summary, we prove that fidelity distance and affinity distance satisfy the strong contractibility condition. Moreover, under two assumptions, namely, convexity of free states and closure of free states under selective free operations, we show that resource quantifiers based on these distances are valid resource measures for a generic resource theory including entanglement, coherence, partial coherence and superposition, providing thereby a unified framework for different quantum resources.

Next, we employ these two resource quantifiers to partial coherence theory. By linking them to quantum mixed state discrimination task, we offer operational interpretation for these two partial coherence measures. Our results thus establish a useful connection between partial coherence and quantum mixed state discrimination task.

We also study correlated coherence under the framework of partial coherence theory. We show that correlated coherence is a kind of quantum discord. Our result, thus, reveals an interesting relation between partial coherence and quantum discord.

\begin{acknowledgments}
We thank Bartosz Regula for useful comments, and Mark Wilde
for telling us about several references on the concept of ``affinity''.
This project is supported by the National Natural Science Foundation of China (Grants No. 11171301, No. 11571307 and No. 61877054). SD and US thank the Zhejiang University for hospitality.
\end{acknowledgments}


%

\end{document}